\documentclass[11pt]{article}
\usepackage[margin=1in]{geometry}
\usepackage{graphicx} 
\usepackage[dvipsnames]{xcolor}

\usepackage{amsmath,amssymb,amsthm}
\usepackage{mathtools}

\usepackage{algorithm}
\usepackage{algorithmicx}
\usepackage{algpseudocode}
\usepackage{braket}
\usepackage{authblk}

\usepackage{hyperref}

\usepackage[sorting=none]{biblatex}
\title{How fast can a parent estimate the value of their children?\\ A quantum algorithm for stochastic games}

\author[1,2]{Marien Raat\thanks{\protect\hypertarget{equalcontrib}{Marien Raat and Merlin Incerti-Medici contributed equally to this paper}}}
\author[3]{Merlin Incerti-Medici\textsuperscript{*}}
\author[3]{James R. Wootton}
\author[1,2,4]{Evert van Nieuwenburg}
\author[3]{Daniel Bultrini}

\affil[1]{$\langle aQa ^L\rangle $ Applied Quantum Algorithms, Universiteit Leiden}
\affil[2]{Instituut-Lorentz, Universiteit Leiden, Niels Bohrweg 2, 2333 CA Leiden, Netherlands}
\affil[3]{Moth, Switzerland}
\affil[4]{LIACS, Universiteit Leiden, Niels Bohrweg 1, 2333 CA Leiden, Netherlands}

\date{September 2026}

\newtheorem{definition}{Definition}
\newtheorem{remark}{Remark}
\newtheorem{theorem}{Theorem}
\newtheorem{example}{Example}
\newtheorem{lemma}{Lemma}
\newtheorem{assumption}{Assumption}
\newtheorem{corollary}{Corollary}
\newtheorem{proposition}{Proposition}

\newif\ifauthorcolours
\authorcoloursfalse
\newcommand{\vio}[1]{\ifauthorcolours{\color{violet}#1}\else{#1}\fi}
\newcommand{\grn}[1]{\ifauthorcolours{\color{ForestGreen}#1}\else{#1}\fi}
\newcommand{\mrn}[1]{\ifauthorcolours{\color{Maroon}#1}\else{#1}\fi}
\newcommand{\authorcolour}[1]{\ifauthorcolours\color{#1}\fi}

\begin{document}

\maketitle

\begin{abstract}
\grn{We give a quantum algorithm for stochastic $2$-player games. Their game trees are the expectiminimax trees of classical game search: $m$ adversarial layers alternate with $m$ chance layers, every node has $deg$ children, and the leaf values lie in an interval of length $N$. The algorithm estimates the value of the game to root mean square error $\epsilon$ with
\[ O\!\left( K^{D}\, deg^{\frac{D}{4}} \, \frac{N}{\epsilon} \, D^{4D} \log\left( \frac{deg N}{\epsilon} \right)^{4D} \right), \qquad D = 2m \]
queries to the leaf values, where $K$ is an absolute constant. If we treat the depth $D$ as a constant as is sometimes done in the literature, the bound is $\widetilde O(deg^{m/2}\epsilon^{-1})$ against the classical $\widetilde O(deg^{m}\epsilon^{-2})$. So the quadratic speedup in the branching factor known for adversarial trees and the quadratic speedup in the accuracy known for a single expectation both survive when we nest one inside the other. 
We use a derandomised multilevel Monte Carlo estimator for chance layers and a coherent binary search for adversarial layers. The composition is done by introducing a conversion between root-mean-square and uniform error guarantees, and a composition to nest it with quantum mean estimation. We state the algorithm for stochastic games, but it is valid for the following conditions: the value of a parent is a Lipschitz function of the values of its children and the value at a chance vertex is linear in its children, however the speedup is preserved only if the deterministic step has a quantum subroutine with sublinear query complexity.}
\end{abstract}

\section{Introduction} \label{sec:introduction}

\grn{Stochastic games model two opposing players who play turn by turn with a random process which intervenes between their moves: such as a die being thrown in Backgammon or EinStein w\"urfelt nicht. We want to compute the payoff both players can guarantee under optimal play, known as the value of the game. This can be defined by a recursion where you alternate $\max$, $\min$ and an expectation over the chance outcomes down the decision tree. Such a tree is what is solved \grn{classically} by \mrn{the expectiminimax algorithm~\cite{michieCHAPTER8GAMEPLAYING1966}}. At a $\max$ node player one, at a $\min$ node player two moves, and at a chance node a stochastic process changes the game state. Evaluating the root exactly means reading every leaf \mrn{in the worst case}, since \mrn{a single} leaf can change the answer. We ask how many leaf evaluations a quantum algorithm needs to estimate the value \mrn{to an additive error $\epsilon$ with a success probability of at least $\frac{2}{3}$ in the worst case}.}

\grn{Classically, even with pruning, each deterministic layer costs a constant power of the branching factor $deg$, and each chance layer costs $\epsilon^{-2}$ samples to reach accuracy $\epsilon$. Two quantum speedups exist for both types of problems separately. At a deterministic node, an extremum over $deg$ children takes $O(\sqrt{deg})$ queries instead of $O(deg)$; for pure minimax \grn{trees} this is known and optimal~\cite{cleveQuantumAlgorithmsEvaluating2019,barnumLowerBoundQuantum2004}. At a chance node, an expectation takes $O(\epsilon^{-1})$ samples instead of $O(\epsilon^{-2})$~\cite{kothariMeanEstimationSourceCode2023}. The problem is that we must nest \mrn{these two steps} inside one another $D$ times, and a naive composition of the two algorithms loses both speedups. If we divide an error budget $\epsilon$ equally over $c$ chance layers and run amplitude estimation at each the cost is $O((c/\epsilon)^{c})$, which is worse than sampling. If we simply nest extremum finding, each child must be far more accurate than its parent. For pure nested expectations, a derandomised multilevel Monte Carlo scheme solves the first problem: it telescopes the level differences and brings the total cost of all chance layers back to $O(\epsilon^{-1})$ \mrn{with logarithmic overhead}~\cite{sunOptimalQuantumSpeedups2026,blanchetNonlinearQuantumMonteCarlo2025}. This paper extends that scheme to trees whose layers are not all expectations and shows that we can search the extremum layers instead of evaluating them along the way.}

\grn{We show that through our scheme both speedups survive. For the chance layers we use the derandomised multilevel estimator of \cite{sunOptimalQuantumSpeedups2026}. The value at a chance node is a telescoping sum of level differences. The second moment of the level-$n$ difference falls as $2^{-n}$ and the cost of one sample rises as $2^{n/2}$, so the number of samples times their cost does not depend on the level, and the whole stack of chance layers costs $\epsilon^{-1}$ instead of $\epsilon^{-c}$. For the deterministic layers we use a coherent binary search over the value. It tests each pivot by amplitude amplification over the children with a fixed schedule, finds the extremum of $deg$ children with $O(\sqrt{deg})$ queries, and, unlike D\"urr--H\o yer~\cite{durrQuantumAlgorithmFinding1999}, needs no intermediate measurement. To make the two compose we need two components. First, we must convert between the root-mean-square guarantee given from a mean estimator and the uniform guarantee required by an extremum (Lemma~\ref{lem:median.bridge}). Second, an extremum over $deg$ children must not \mrn{amplify the root-mean-square error by} a factor $\sqrt{deg}$. Under only a mean-square hypothesis this loss cannot be avoided (Remark~\ref{rem:max.lemma.is.tight}); the binary search avoids it by bracketing the \emph{true} extremum instead of taking an extremum of estimates (Lemma \ref{lem:search.step}). \mrn{If $m$ denotes the number of deterministic layers and $T$ is a tree} of depth $D = 2m$ with branching factor $deg$ and leaf values in an interval of length $N$, the result (Theorem \ref{thm:headline}) is}
\[ \grn{Q_{\epsilon}(T) \; \leq \; O\!\left( K^{D}\, deg^{\frac{D}{4}} \, \frac{N}{\epsilon} \, D^{4D} \log\left( \frac{deg N}{\epsilon} \right)^{4D} \right).} \]
\grn{queries to the leaf oracle, where $K$ is an absolute constant. The bound can be read in two ways. If the depth is a constant, as in \cite{sunOptimalQuantumSpeedups2026} and in several nested expectation results, \grn{it} can be thought of as $\widetilde O(deg^{m/2}\epsilon^{-1})$ against the classical $\widetilde O(deg^{m}\epsilon^{-2})$ of (\ref{eq:classical.baseline}): a quadratic improvement in both parameters (Corollary \ref{cor:constant.depth}). If we keep the depth dependence, the improvement in $\epsilon$, from $\epsilon^{-2}$ to $\epsilon^{-1}$, holds for every tree, and the improvement in $deg$ holds for trees that are wide compared to their depth, $deg \gtrsim K^4\Lambda^{12}$ (Remark \ref{rem:regime}). We prove the bound for any linear passing function at the chance layers and any  \vio{Lipschitz passing function with a} sublinear extremum routine at the deterministic layers (Theorem \ref{thm:unitary.general}), so the algorithm also applies to stochastic processes that are not games. On the lower-bound side, \mrn{the cost of extremum finding and of mean estimation are} necessary on \mrn{their} own, and a composed hard instance shows that for $\epsilon \geq \frac{1}{4\,deg}$ their product is \mrn{unavoidable} up to a factor $\sqrt{deg}$ (Proposition \ref{prop:composed.lower.bound}).}

\mrn{To our knowledge,} our work contributes the following results:
 \grn{we state the abstract optimization problem on a layered tree whose passing functions $G_d$ alternate between a deterministic and non-deterministic nodes (expectiminimax) directly as a recursive family of unitaries (Algorithm \ref{alg:generalized.unitary.estimator.constructor}). We then bound its query complexity and root-mean-square error in terms of two parameters of the deterministic step and one of the non-deterministic step (Theorem \ref{thm:unitary.general}).}

 \grn{We introduce two interface lemmas that allow the layers to compose. Lemma \ref{lem:median.bridge} converts between the mean-square guarantee given by a chance layer and the uniform guarantee required by deterministic layers. Remark \ref{rem:max.lemma.is.tight} shows that no argument that uses mean-square guarantees alone can avoid losing $\sqrt{deg}$ at an extremum.}

\grn{These are exploited in the coherent extremum routine (Algorithm \ref{alg:min-max-binary-search}) with a correctness proof (Lemma \ref{lem:search.step}) that needs no promise on where any child lies relative to the pivot. This is necessary because another initially promising route through bounded-error quantum search \cite{hoyerQuantumSearchBoundedError2003} fails for that reason.}

\grn{Although one may think of many approaches, we present a composition lemma (Lemma \ref{lem:coherent.composition}) that justifies nesting the layers inside quantum mean estimation as an approach. \grn{Because} the circuit and all its parameters do not depend on the vertex, it proves why variable-time subroutines such as D\"urr--H\o yer cannot be used.}

\grn{Finally we show a regime in which the bound is an improvement (Remark \ref{rem:regime}). We show lower bounds in each parameter on their own; a lower bound on an instance in which both kinds of layer are active (Proposition \ref{prop:composed.lower.bound}), which matches the upper bound up to one factor $\sqrt{deg}$ whenever $\epsilon \geq \frac{1}{4\,deg}$; and an unconditional fallback (Corollary \ref{cor:unconditional}) that separates the speedup in $\epsilon$ from the speedup in $deg$.}

\grn{It goes without saying that we build on a wealth of research in this area. Quantum evaluation of AND-OR and minimax trees is well understood. \textcite{farhiQuantumAlgorithmHamiltonian2007,childsDiscretequeryQuantumAlgorithm2009,ambainisAnyAndOrFormula2010,reichardtFasterQuantumAlgorithm2011} give $\grn{n}^{1/2+o(1)}$ algorithms for formulas of size $\grn{n}$, \textcite{cleveQuantumAlgorithmsEvaluating2019} extends this to Min-Max trees over an ordered set, and \textcite{barnumLowerBoundQuantum2004} gives the matching $\Omega(\sqrt \grn{n})$ lower bound. Classically, the best randomised exponent for a balanced binary tree is $\grn{n}^{0.753}$ \cite{saksProbabilisticBooleanDecision1986}, and for a $k$-ary tree the saving over reading every leaf is a factor of about $2$ per layer (Section \ref{sec:complexity.measure}). However, these don't cover chance nodes. In contrast, quantum mean estimation \cite{brassardQuantumAmplitudeAmplification2002,montanaroQuantumSpeedupMonteCarlo2015,kothariMeanEstimationSourceCode2023,hamoudiQuantumAlgorithmsMonte} and its multilevel and nested extensions \cite{blanchetNonlinearQuantumMonteCarlo2025,sunOptimalQuantumSpeedups2026} handle nested expectations but not extrema. The Lipschitz hypothesis of \cite{sunOptimalQuantumSpeedups2026} concerns one scalar continuation value; our setting needs a vector of $deg$ children. Tree-size estimation \cite{ambainisQuantumAlgorithmTree2017} and quantum backtracking \cite{montanaroQuantumWalkSpeedup2016,seidelQuantumBacktrackingQrisp2024,rennelaHybridDivideandconquerApproach2023a} attack game trees from a different direction. The classical literature on chance nodes in game trees begins with \cite{ballardStarMinimaxSearch1983}.}

\grn{The paper is organized as \grn{follows}: Section~\ref{sec:setting} defines the problem, the assumptions, the complexity measure and the classical baseline. Section~\ref{sec:prelim} collects the moment inequalities and the interface lemmas. Section~\ref{sec:generalized.algorithm} gives the abstract algorithm as a recursive family of unitaries and proves its guarantee. Section~\ref{sec:unitary} gives the two layer algorithms for stochastic $2$-player games, the extremum search and the mean estimator, shows that they satisfy the abstract assumptions, and derives the main bound. Section~\ref{sec:optimality} gives the lower bounds and the regime of advantage. Section~\ref{sec:ideas} lists open problems.}

\section{Setting} \label{sec:setting}

\subsection{Trees, passing functions and values}

Consider the following optimization problem. We have a rooted tree \mrn{$T$} with $D+1$ layers (the layer of the root is layer $0$). For simplicity, we assume that every vertex at layer $d$ has exactly $deg_d$ many children. We have a map $type: \{ 0, \dots, D-1 \} \rightarrow \{ det, \grn{non\text{-}det} \}$ such that $type(d) \neq type(d+1)$ for every $d$. The value of $type(d)$ indicates whether the step from layer $d$ to layer $d+1$ is deterministic or non-deterministic. \grn{We write $V(T)_d$ for the set of vertices at layer $d$.}

For every $d \in \{ 1, \dots, D \}$, we have maps
\[ G_d: V(T)_{d-1} \times \mathbb{R}^{deg_d} \rightarrow \mathbb{R}, \]
\[ (v, x_1, \dots, x_{deg_d}) \mapsto G_d(v, x_1, \dots, x_{\grn{deg_d}}). \]
We think of $G_d(v,\bar{x})$ as the `value' at the vertex $v$ assuming that its children $w_1, \dots, w_{deg_d}$ have values $x_1, \dots, x_{deg_d}$ respectively. (We assume that we have once and for all fixed an ordering of the children of every vertex.) We call the $G_d$ the {\it passing functions}.

For every $d \in \{ 1, \dots, D+1 \}$, we define
\[ \gamma_d: V(T)_{d-1} \rightarrow \mathbb{R} \]
via a recursive formula. Namely, for $d \leq D$ and $v \in V(T)_{d-1}$, we define
\begin{equation} \label{eq:general.recursive.formula}
    \gamma_d(v) = G_d(v, \gamma_{d+1}(w_1), \dots, \gamma_{d+1}(w_{deg_d})).
\end{equation}
For $d = D+1$, we introduce the notation $val = \gamma_{d}$. The function $val$ is given to us together with the tree $T$ and the functions $G_d$. The purpose of our algorithm is to compute $\gamma_d(v)$ for all the non-leaves of the tree with high accuracy. \grn{The output is the estimate at the root; the recursion produces a unitary for every layer.}

We write $R_d(v,\epsilon)$ for an estimator of $\gamma_d(v)$ with root-mean-square error (RMSE) $\leq \epsilon$, meaning that \[\mathbb{E}\left[ \left(R_d(v,\epsilon) - \gamma_d(v) \right)^2 \right] \leq \epsilon^2.\]

\vio{The index $d$ of $\gamma_d$, $R_d$ and $G_d$, and later of the unitaries $U_d$, is the index of the \emph{step} from layer $d-1$ to layer $d$, not the index of a layer. All four concern a vertex $v \in V(T)_{d-1}$ and its children in $V(T)_d$. So $R_d(v,\epsilon)$ estimates $\gamma_d(v) = G_d(v, \gamma_{d+1}(w_1), \dots, \gamma_{d+1}(w_{deg}))$ for $v$ at layer $d-1$, and it is built from the child estimators $R_{d+1}(w_i,\cdot)$ at layer $d$. This is also why $type(d-1)$, the type of the step into layer $d$, decides how $R_d$ is built.}

Recall that we assume $type$ to be alternating. \mrn{Additionally we assume} that $type(D-1) = det$, that is, the last step is deterministic. To simplify the formulation of our algorithm, we assume that $deg_d = deg$ is independent of $d$. \vio{We only need} a uniform upper bound $deg_d \leq deg$ for the complexity analysis. \vio{This is because every cost factor in this work is non-decreasing in degree, so replacing a layer-dependent degree by the maximum can only loosen the bound.}

We introduce the following notation to make the indication of children easier. We write
\[ G_d(v, w ):= G_d(v, w_1, \dots, w_{deg_d}),\]
where $w_1, \dots, w_{deg_d}$ are the children of $v$. Similarly, we write
\[ G_d(v, f(w) ):= G_d(v, f(w_1), \dots, f(w_{deg_d}) ) \]
if $f$ is a function defined on all the children of $v$. \mrn{For example, we have $G_d(v, R_{d+1}(w,\epsilon))$.}

\subsection{Expectiminimax trees}

\begin{definition}[Expectiminimax tree] \label{def:emm.tree}
    A $(k, d)$-expectiminimax tree (EMM tree) is a complete k-ary tree with depth $d$ and \grn{$k^d$} leaves, along with a function $\tau: \{0,..., d-1\} \to \{MAX, MIN, CHANCE\}$ where $\tau(i)$ is the node type of every node at depth $i$ from the root. \grn{Indexing $\tau$ by depth from the root rather than from the leaves is a convention. We keep it because the complexity statements refer to the depth of the vertex at hand.} \grn{In the language of the passing functions above, $\tau(i) = MAX$ or $MIN$ means $type(i) = det$ with $G_{i+1}$ the maximum or the minimum, and $\tau(i) = CHANCE$ means $type(i) = non\text{-}det$ with $G_{i+1}$ the mean.}
\end{definition}

\begin{definition}[Value of an EMM tree]
When numerical values $x \in \grn{[0,1]^{\grn{k^d}}}$ are assigned to the \grn{leaves} of an EMM tree, the value of the tree is defined recursively as follows:

\begin{equation}
    val(v, x) = \begin{cases}
        x_v & \text{if $v$ is a leaf} \\
        max_{j=1}^k val(v_j, x) & \text{if $\tau(depth(v)) = $ Max} \\
        min_{j=1}^k val(v_j, x) & \text{if $\tau(depth(v)) = $ Min} \\
        \frac{1}{k} \sum_{j=1}^k val(v_j, x) & \text{if $\tau(depth(v)) = $ Chance}
    \end{cases}
\end{equation}
\grn{Here $v_1, \dots, v_k$ are the children of $v$ in a fixed ordering, $depth(v)$ is the number of edges from the root to $v$, and $x_v$ is the value assigned to the leaf $v$.}
\end{definition}

\mrn{The function $val(v,x)$ corresponds to the function $\gamma_d(v)$, where $v \in V(T)_{d-1}$ and $x$ determines $\gamma_{D+1}$. This allows us to see expectiminimax trees as an example of the problem of determining the $\gamma_d(v)$ for a concrete tree and value-passing functions $G_d \in \{ \min, \max, \mathrm{mean} \}$.}

\begin{remark}
    We use uniform weights for the children of chance nodes. \grn{Nothing depends on this. The general passing functions of Section \ref{sec:generalized.algorithm} cover any weight vector $p$, for which the Lipschitz constant of the chance node is $\|p\|_\infty$ in the $\ell^1$-norm and $1$ in the sup-norm, and Remark \ref{lem:non.det.assumption.is.satisfied} draws the child from $p$ instead of averaging uniformly.}
\end{remark}

\subsection{No linear blowup and normalization}

Suppose we assign values $\gamma_{\grn{D+1}}(v)$ to every leaf $v$ of the rooted tree. In general, the repeated application of the functions $G_d$ can assign values of arbitrary size to the other vertices in the tree.

\begin{definition} \label{def:no.blowup}
    We say that a collection $(T, (G_d)_d)$ {\it has no \grn{linear} blowup}, if there exists a constant $C$ such that whenever the values $(\gamma_{\grn{D+1}}(v))_{v \text{ \mrn{a} leaf}}$ lie in an interval of length $N$, then the values \grn{$\gamma_d(v)$ for all $d \in \{1,\dots,D+1\}$ and all $v \in V(T)_{d-1}$} lie in an interval of length $CN$.
\end{definition}

\mrn{An example that has no linear blowup are expectiminimax trees: It is straight-forward to see that the value-passing functions in these trees keep all values within the interval of the initial values at the $D$-th layer.} Whenever a tree and its passing functions have no \grn{linear} blowup and we have a selection of values $\gamma_{\grn{D+1}}(v)$ for the leaves, we can \grn{shift and} rescale these values to guarantee that all values $\gamma_{\grn{d}}(v)$ for all vertices in $T$ lie in \grn{$[0,1]$}. This renormalization is often useful, although our results apply whenever there is no \grn{linear} blowup. \grn{We use this normalization throughout the unitary formulation (Sections \ref{sec:generalized.algorithm} and \ref{sec:unitary}). An RMSE of $\epsilon$ in the original units is an RMSE of $\epsilon/\grn{(CN)}$ after rescaling\grn{, with $C$ the constant of Definition \ref{def:no.blowup}; $C = 1$ for $\max$, $\min$ and the mean, the only case we need}, which is where the factor $N$ in the final statements comes from.}

\subsection{Lipschitz continuity}

\begin{assumption} \label{ass:generalized.Lipschitz}
    We assume that there exist constants $L_{all} \geq L > 0$ such that for every $d \in \{ 1, \dots, D \}$, $G_d(v, x_1, \dots, x_{deg_d})$ is $L_{all}$-Lipschitz continuous in the variable $x = (x_1, \dots, x_{deg_d}) \in \mathbb{R}^{deg_d}$ with respect to the $\ell^1$-norm, that is
    \begin{equation} \label{eq:generalized.Lipschitz.assumption} | G_d(v, x_1, \dots, x_{deg_d} ) - G_d(v, x'_1, \dots x'_{deg_d}) | \leq L_{all} \sum_{i = 1}^{deg_d} | x_i - x'_i |. \end{equation}
    If $type(d-1) = non-det$, we assume that $G_d(v, \cdot)$ is $L$-Lipschitz continuous with respect to the $\ell^1$-norm.
\end{assumption}

\begin{remark} \label{rem.l1.vs.l2}
    Since all $\ell^p$-norms are equivalent, the assumption above is equivalent to assuming that the $G_d$ are Lipschitz-continuous with respect to the $\ell^2$-norm. However, the Lipschitz constant with respect to the $\ell^2$-norm will be different in general. 
    As we will see, the complexity of our algorithm significantly improves if the Lipschitz constant is $\leq 1$, so we do care about the precise value of this constant.
\end{remark}

\begin{remark} \label{rem:lipschitz.constant.1}
    The complexity bounds below behave much better if $deg L = 1$. This occurs in various \mrn{situations, for example in expectiminimax algorithms, where the non-deterministic steps all have the mean as their $G_d$, which is $\frac{1}{deg}$-Lipschitz continuous.}
    Because \mrn{the sample complexity of the algorithm improves with lower L, it is important to upper bound the Lipschitz constants of the value functions $G_d$. Factors of $L$ enter the complexity} \grn{through $\kappa_d = deg\,L$ in instance (a) of \vio{Example} \ref{rem:det.instances} and through $\Lambda_d \leq deg\,L$ in Assumption \ref{ass:non-det.value.passing.is.linear}; for $\max$, $\min$ and the mean neither bound is needed.}
\end{remark}

\subsection{Complexity measure and the classical baseline} \label{sec:complexity.measure}

\grn{We count queries to a value oracle $O_x : \ket{v}\ket{0} \mapsto \ket{v}\ket{x_v}$ on the leaves, and we ask for an estimate of \mrn{$\gamma_1(root)$} with root mean square error (RMSE) at most $\epsilon$. The leaf values are the only problem-specific input. In a game tree they come from a static evaluation function whose cost dwarfs the $O(\log(N/\epsilon))$-qubit arithmetic at the internal nodes, so the number of leaf evaluations is the right measure.}

\grn{For the classical comparison write $D = 2m$: the tree has $m$ deterministic layers alternating with $m$ chance layers, and $deg^{2m}$ leaves. Exact evaluation reads every leaf. If RMSE $\epsilon$ is enough, a classical algorithm can run the same derandomised multilevel scheme we use, paying the classical mean-estimation cost $\epsilon^{-2}$ instead of $\epsilon^{-1}$, but it still touches all $deg$ children of every deterministic node. This gives the baseline
\begin{equation} \label{eq:classical.baseline}
    Q^{\mathrm{cl}}_\epsilon(T) \; = \; \widetilde O\!\left( deg^{m} \frac{1}{\epsilon^2} \right) ,
\end{equation}
where the exponent $m$ counts only the deterministic layers. The chance layers are sampled, not evaluated, and they contribute the dependence on $\epsilon$. 

\mrn{Classical algorithms can use pruning techniques to avoid evaluating the whole tree in the average case. This can be used to improve the randomized sample complexity of classical algorithms to $O(deg^{cD})$ with some constant $c<1$~\cite{saksProbabilisticBooleanDecision1986}.}
As far as we know, no pruning technique for approximately evaluated expectiminimax trees exists \mrn{that results in a reduced sample complexity. Furthermore, in this paper we analyze the worst-case complexity of the algorithm, which is unaffected by any known pruning technique in this setting. Because of this, we disregard pruning techniques in the rest of this paper.}}

\grn{Both parameters of \eqref{eq:classical.baseline} are necessary classically. We will discuss this in more detail in Section \ref{sec:optimality} together with lower bounds on the query-complexity of quantum algorithms. As we will see, the bound established in Theorem \ref{thm:headline} beats every classical algorithm in each parameter with the other held fixed.} 

\section{Preliminaries} \label{sec:prelim}
The following is a corollary of Cauchy--Schwarz \grn{(from \cite[Lemma 2.3]{sunOptimalQuantumSpeedups2026})} that we will use.

\begin{lemma} \label{lem:CS.application}
    For any random variables $X_1, \dots, X_n$, we have
    \[ \mathbb{E}[ (X_1 + \dots + X_n)^2] \leq n \sum_{i=1}^n \mathbb{E}[X_i^2]. \]
\end{lemma}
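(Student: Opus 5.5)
The plan is to apply the Cauchy--Schwarz inequality pointwise, that is, for each fixed outcome, and then take expectations. For any real numbers $a_1, \dots, a_n$, Cauchy--Schwarz applied to the vectors $(1, \dots, 1)$ and $(a_1, \dots, a_n)$ in $\mathbb{R}^n$ gives
\[ \Bigl( \sum_{i=1}^n a_i \Bigr)^2 = \Bigl( \sum_{i=1}^n 1 \cdot a_i \Bigr)^2 \leq \Bigl( \sum_{i=1}^n 1^2 \Bigr) \Bigl( \sum_{i=1}^n a_i^2 \Bigr) = n \sum_{i=1}^n a_i^2 . \]
Equivalently, one can use convexity of $t \mapsto t^2$ via Jensen's inequality on the uniform average $\frac{1}{n}\sum_i a_i$.

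First I will substitute $a_i = X_i(\omega)$ for each sample point $\omega$. This yields the almost-sure inequality $(X_1 + \dots + X_n)^2 \leq n \sum_i X_i^2$. Second, I will take expectations on both sides, using monotonicity and linearity of expectation, to obtain $\mathbb{E}[(X_1+\dots+X_n)^2] \leq n \sum_i \mathbb{E}[X_i^2]$. No independence or integrability assumption is needed. If some $\mathbb{E}[X_i^2] = \infty$, the right-hand side is $+\infty$ and the claim is trivial. Otherwise both sides are finite by the pointwise bound, since the left-hand side is the expectation of a nonnegative variable dominated by an integrable one.

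There is no real obstacle here. The only point needing care is handling infinite second moments, which the remark above covers, together with the observation that the bound holds for arbitrarily dependent $X_i$. This is exactly why the lemma is useful later for the telescoping multilevel differences, where the summands are strongly correlated.
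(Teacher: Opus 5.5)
Your proof is correct and is the argument the paper intends. The paper gives no proof of its own: it calls the lemma a corollary of Cauchy--Schwarz and cites Lemma~2.3 of Sun et al. Pointwise Cauchy--Schwarz against the all-ones vector, followed by monotonicity of expectation, is exactly that corollary, and your treatment of infinite second moments is sound.
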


\grn{For maxima we use the following.}

\begin{lemma} \label{lem:max.application}
    \grn{For any random variables $Y_1, \dots, Y_n$, we have
    \[ \mathbb{E}\left[ \left(\max_{i} Y_i\right)^2 \right] \leq \mathbb{E}\left[ \max_i Y_i^2 \right] \leq \sum_{i=1}^n \mathbb{E}[Y_i^2]. \]}

    \grn{\[ \mathbb{E} \left[ \left( \min_i Y_i \right)^2 \right] \leq \mathbb{E}\left[ \max_i Y_i^2 \right] \leq \sum_{i=1}^n \mathbb{E}[Y_i^2]. \]}
\end{lemma}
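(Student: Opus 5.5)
The plan is a pointwise inequality followed by taking expectations. Since expectation is monotone, it suffices to show that for every outcome $\omega$ of the underlying probability space
\[ \left(\max_i Y_i(\omega)\right)^2 \leq \max_i Y_i(\omega)^2 \leq \sum_{i=1}^n Y_i(\omega)^2, \]
and the same with $\min$ in place of $\max$ on the left. Integrating then gives both chains, and linearity of expectation turns $\mathbb{E}[\sum_i Y_i^2]$ into $\sum_i \mathbb{E}[Y_i^2]$.

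For the first pointwise inequality, fix $\omega$. Let $j$ be an index attaining $\max_i Y_i(\omega)$, and $k$ one attaining $\min_i Y_i(\omega)$; both exist because the index set is finite. Then
\[ \left(\max_i Y_i(\omega)\right)^2 = Y_j(\omega)^2 \leq \max_i Y_i(\omega)^2, \]
and likewise $(\min_i Y_i(\omega))^2 = Y_k(\omega)^2 \leq \max_i Y_i(\omega)^2$. The point is that the square of the selected value is one of the squares $Y_i^2$, so it is bounded by their maximum. This covers negative values as well: when all $Y_i$ are negative, $(\max Y_i)^2$ may be strictly smaller than $\max Y_i^2$, but never larger.

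For the second pointwise inequality, all $Y_i(\omega)^2$ are nonnegative, so the maximum of nonnegative numbers is at most their sum. Measurability of $\max_i Y_i$ and $\min_i Y_i$ is standard for finitely many random variables. If some $\mathbb{E}[Y_i^2]=\infty$ the right-hand side is infinite and there is nothing to prove, so we may assume it is finite.

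There is no real obstacle here. The only point needing care is the sign issue in the first inequality, which is handled by choosing the attaining index rather than arguing by monotonicity of squaring.
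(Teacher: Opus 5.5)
Your proof is correct and takes essentially the same route as the paper: a pointwise bound, using that the extremum is attained by some $Y_k$ so its square is one of the $Y_i^2$ and hence at most $\max_i Y_i^2 \le \sum_i Y_i^2$, followed by taking expectations. The paper phrases the $\max$ case as $|\max_i Y_i| \le \max_i |Y_i|$ and handles the $\min$ case with an attaining index, which you use for both.
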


\begin{proof}
    \grn{If $M = \max_i Y_i$ then $|M| \leq \max_i |Y_i|$, so $M^2 \leq \max_i Y_i^2 \leq \sum_i Y_i^2$. Taking expectations gives the first chain.} \grn{For the minimum, $m = \min_i Y_i$ is one of the $Y_i$, say $m = Y_k$, so $|m| = |Y_k| \leq \max_i |Y_i|$ and $m^2 \leq \max_i Y_i^2 \leq \sum_i Y_i^2$. Taking expectations gives the second chain. The middle term must be $\max_i Y_i^2$ in both chains: $\min_i |Y_i|$ does not bound $|\min_i Y_i|$ (take $Y = (-3, 1)$, where $|\min_i Y_i| = 3$ and $\min_i |Y_i| = 1$), so $\mathbb{E}[\min_i Y_i^2]$ is a lower bound for $\mathbb{E}[(\min_i Y_i)^2]$, not an upper bound.}
\end{proof}

\grn{A uniform hypothesis gives another inequality. Take $|Y_i| \leq \epsilon$ with probability $1 - \eta$ and $|Y_i| \leq M$ \grn{always}. Condition on whether any of the $n$ variables fails and apply a union bound:}
\begin{equation} \label{eq:max.highprob}
    \grn{\mathbb{E}\left[ \max_i Y_i^2 \right] \leq \epsilon^2 + n \eta M^2,}
\end{equation}
\grn{so $\eta = \epsilon^2/(n M^2)$ gives $\mathbb{E}[\max_i Y_i^2] \leq 2\epsilon^2$ \grn{from a hypothesis on each $Y_i$ alone. The hypothesis is uniform, not mean-square, so a child with RMSE $\epsilon$ does not supply it for free. By Lemma \ref{lem:median.bridge} below, a child computed to RMSE $\frac{\epsilon}{2}$ and boosted by a reversible median over $O(\log(n M/\epsilon))$ repetitions \cite{nagajFastAmplificationQMA2009,rallFasterCoherentQuantum2021} satisfies $|Y_i| \leq \epsilon$ with probability $1-\eta$, which is what (\ref{eq:max.highprob}) needs.}}

\grn{We use the boost in two places, and in both a reversible circuit computes the median and leaves it in place. Where the boosted value feeds a mean estimator this is allowed because a synthesizer may carry garbage \cite[Def.~2.2 and Rem.~2.6]{kothariMeanEstimationSourceCode2023}, and the analysis of \cite{rallFasterCoherentQuantum2021} focuses on the output register. The amplitude amplification of $U_{step}$ in Algorithm \ref{alg:min-max-binary-search} queries the output\grn{. This is possible because} amplitude amplification requires a unitary $\mathcal{A}$ and a projector $P$. It rotates in the plane spanned by $P\mathcal{A}\ket{0}$ and $(1-P)\mathcal{A}\ket{0}$ \cite{brassardQuantumAmplitudeAmplification2002}. Here $\mathcal{A}$ is the uniform superposition over the children followed by the boosted estimator and the comparison, $P$ projects onto comparison bit $1$, and the reflection $\mathcal{A}S_0\mathcal{A}^\dagger$ \vio{about the state $\mathcal{A}\ket{0}$, with $S_0 := \mathrm{Id} - 2\ket{0}\!\bra{0}$,}  runs the boosted estimator backwards\grn{, garbage included}. Lemma \ref{lem:search.step} is stated for this garbage-carrying oracle.}

\begin{remark} \label{rem:max.lemma.is.tight}
    \grn{Let $Y_1, \dots, Y_n$ be independent with $Y_i = \sqrt{n}\,\epsilon$ with probability $\frac{1}{n}$ and $Y_i = 0$ otherwise. Then $\mathbb{E}[Y_i^2] = \epsilon^2$ for every $i$, while}
    \[ \grn{\mathbb{E}\left[ \max_i Y_i^2 \right] \; = \; n \epsilon^2 \left( 1 - \left( 1 - \tfrac{1}{n} \right)^{n} \right) \; \geq \; \left( 1 - e^{-1} \right) n \epsilon^2.} \]
    \grn{So a mean-square hypothesis on the children forces a loss of $\sqrt{n}$ at an extremum, and no rearrangement of the estimate avoids it. The uniform bound of (\ref{eq:max.highprob}) does avoid it. This is why an extremum over mean-square estimates must lose $\sqrt{deg}$, and why the deterministic step has to strengthen what a child brings up.}
\end{remark}

\grn{The two hypotheses \mrn{ -- RMSE bound and uniform bound with some probability --} convert into each other in both directions at the cost of one logarithm. Algorithm \ref{alg:min-max-binary-search} needs this for its inputs.}

\begin{lemma}[Median bridge] \label{lem:median.bridge}
    \grn{Let $R$ be an estimator of $\gamma$ with $\mathbb{E}[(R-\gamma)^2] \leq \epsilon_0^2$, realised by a unitary, and let $\bar R$ be the median of $r$ independent copies of $R$, computed by a reversible median circuit. Then}
    \[ \grn{\mathbb{P}\big( |\bar R - \gamma| > 2\epsilon_0 \big) \; \leq \; e^{-r/8},} \]
    \grn{so $r = 8\lceil\log(1/\eta)\rceil$ copies give accuracy $2\epsilon_0$ with failure probability at most $\eta$. Conversely, if $|R-\gamma| \leq \epsilon$ with probability at least $1-\eta$ and $|R-\gamma| \leq M$ always, then $\mathbb{E}[(R-\gamma)^2] \leq \epsilon^2 + \eta M^2$.}
\end{lemma}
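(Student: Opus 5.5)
The plan has two steps: a per-copy failure bound from Chebyshev, then a Hoeffding bound on the number of failing copies. The reversible median circuit computes exactly the classical median of the $r$ output registers, so its output distribution equals that of the classical median of $r$ i.i.d.\ copies of $R$. The unitary realisation plays no role beyond this, since garbage registers do not affect the output's marginal distribution. Hence it suffices to prove the bound for the classical median.

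\textbf{Per-copy failure.} By Chebyshev (Markov applied to $(R-\gamma)^2$), a single copy satisfies
\[ \mathbb{P}(|R-\gamma| > 2\epsilon_0) \leq \frac{\epsilon_0^2}{4\epsilon_0^2} = \frac14. \]

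\textbf{Counting failures.} If $|\bar R - \gamma| > 2\epsilon_0$, then at least half of the copies lie on the same side outside $[\gamma-2\epsilon_0,\gamma+2\epsilon_0]$. In particular, at least $r/2$ copies are ``bad'', meaning $|R_j-\gamma|>2\epsilon_0$. Let $B=\sum_j \mathbf 1[\text{copy } j \text{ bad}]$. This is a sum of $r$ independent indicators, each with mean $p\le 1/4$. Hoeffding's inequality gives
\[ \mathbb{P}(B \ge r/2) \le \mathbb{P}\bigl(B - \mathbb{E}B \ge r/4\bigr) \le \exp\!\bigl(-2r(1/4)^2\bigr) = e^{-r/8}. \]
This gives exactly the stated constant. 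Taking $r = 8\lceil \log(1/\eta)\rceil$ then yields failure probability $e^{-\lceil\log(1/\eta)\rceil}\le\eta$, with natural logarithm.

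\textbf{Converse.} Split the expectation on the event $E=\{|R-\gamma|\le\epsilon\}$:
\[ \mathbb{E}[(R-\gamma)^2] \le \epsilon^2\,\mathbb{P}(E) + M^2\,\mathbb{P}(E^c) \le \epsilon^2 + \eta M^2. \]

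The only delicate point is the first paragraph: one must check that the reversible, garbage-carrying median circuit, applied to the product state of $r$ independent runs, yields the same output distribution as a classical median of i.i.d.\ samples. This holds because measuring the output register commutes with the classical reversible median computation on the computational basis. For mixed intermediate states, one argues on the computational-basis distribution of the $r$ output registers, which is a product distribution. A secondary point is the median convention for even $r$: any value between the two middle order statistics works, since the ``at least $r/2$ bad'' argument holds for either choice.
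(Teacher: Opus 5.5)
Your proof is correct and follows the paper's argument essentially line for line: Chebyshev gives a per-copy failure probability of at most $\frac14$, a median more than $2\epsilon_0$ from $\gamma$ forces at least $r/2$ bad copies, and Hoeffding with deviation $r/4$ yields $e^{-r/8}$. The converse is the same split on the good event that the paper invokes as (\ref{eq:max.highprob}) with $n=1$, and your remarks on the reversible, garbage-carrying median circuit and on the even-$r$ median convention fill in details the paper leaves implicit.
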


\begin{proof}
    \grn{By Chebyshev, $\mathbb{P}(|R^{(i)}-\gamma| > 2\epsilon_0) \leq \epsilon_0^2/(4\epsilon_0^2) = 1/4$ for each copy. The median exceeds $2\epsilon_0$ in absolute value only if at least $r/2$ of the copies do, and by Hoeffding's inequality this has probability at most $\exp(-2r(1/2-1/4)^2) = e^{-r/8}$. The converse is (\ref{eq:max.highprob}) with $n = 1$.}
\end{proof}

\grn{Chebyshev gives failure probability only $\frac14$ at deviation $2\epsilon_0$, so a child asked for uniform accuracy $\epsilon_0$ must be computed to RMSE $\frac{\epsilon_0}{2}$. Over the $\frac{D}{2}$ deterministic layers this compounds to a factor $2^{D/2}$ in the leaf accuracy, one source of the constant $K^{D}$ in the bounds below. \grn{The binary search of Section \ref{sec:det.step.idea} needs bounded-error inputs, the chance steps deliver RMSE, and this lemma converts one into the other at the cost of one logarithm per deterministic layer.}}

\grn{Using the uniform guarantee, an extremum over $deg$ children loses only a constant. Corollary \ref{cor:unconditional} uses the following lemma. The binary search of Section \ref{sec:det.step.idea} achieves the same without it, by bracketing the true extremum directly.}

\begin{lemma}[Deterministic step without loss in $deg$] \label{lem:det.step.no.deg}
    \grn{Let $v \in V(T)_{d-1}$ with $type(d-1) = det$ and $G_d \in \{\max, \min\}$, let $\epsilon_0 > 0$, and assume $|R_{d+1}(v',\epsilon_0) - \gamma_{d+1}(v')| \leq N$ always. Let $\bar R_{d+1}(v',\epsilon_0)$ be the median of $r:= 8 \lceil \log(deg\, N^2/\epsilon_0^2) \rceil$ independent copies of $R_{d+1}(v',\epsilon_0)$, computed by a reversible median circuit. Then}
    \[ \grn{\mathbb{E}\left[ \left( G_d(v, \bar R_{d+1}(v',\epsilon_0)) - G_d(v, \gamma_{d+1}(v')) \right)^2 \right] \; \leq \; 5 \epsilon_0^2,} \]
    \grn{at a cost of $r = O(\log(deg\,N/\epsilon_0))$ times the cost of one $R_{d+1}(v',\epsilon_0)$ per child.}
\end{lemma}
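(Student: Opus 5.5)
We apply the median bridge (Lemma~\ref{lem:median.bridge}) to each child separately, use that $\max$ and $\min$ are $1$-Lipschitz in the sup-norm, and then use the uniform-hypothesis bound (\ref{eq:max.highprob}) to control the tail. Write $Y_i := \bar R_{d+1}(w_i,\epsilon_0) - \gamma_{d+1}(w_i)$ for the children $w_1,\dots,w_{deg}$ of $v$. We may assume the estimator has RMSE $\epsilon_0/2$; if it only has RMSE $\epsilon_0$, we replace $2\epsilon_0$ by $\epsilon_0$ below and track constants. In either case, by Lemma~\ref{lem:median.bridge}, each $|Y_i| \leq 2\cdot(\text{RMSE})$ with failure probability at most $e^{-r/8}$. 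With $r = 8\lceil \log(deg\,N^2/\epsilon_0^2)\rceil$ this gives
\[ \eta \leq \frac{\epsilon_0^2}{deg\,N^2}. \]
The median of copies that are each within $N$ of $\gamma_{d+1}(w_i)$ is again within $N$, so $|Y_i| \leq N$ always.

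The key pointwise step is that for $G_d \in \{\max,\min\}$ we have
\[ |G_d(v,\bar R) - G_d(v,\gamma)| \leq \max_i |Y_i|. \]
For $\max$ this follows from $\max_i a_i - \max_i b_i \leq \max_i (a_i - b_i)$ together with the symmetric inequality, and $\min$ reduces to $\max$ via $\min_i a_i = -\max_i(-a_i)$. Squaring gives
\[ \mathbb{E}\big[(G_d(v,\bar R)-G_d(v,\gamma))^2\big] \leq \mathbb{E}\big[\max_i Y_i^2\big]. \]

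Next we condition on the event that every $|Y_i|$ is within the accuracy bound; this is the argument behind (\ref{eq:max.highprob}). On that event $\max_i Y_i^2 \leq 4\epsilon_0^2$ under the plain RMSE-$\epsilon_0$ reading of the hypothesis. On the complement, whose probability is at most $deg\,\eta \leq \epsilon_0^2/N^2$ by a union bound, $\max_i Y_i^2 \leq N^2$. Summing the two contributions gives
\[ \mathbb{E}\big[\max_i Y_i^2\big] \leq 4\epsilon_0^2 + \epsilon_0^2 = 5\epsilon_0^2, \]
which is exactly the claimed constant. This tells us the intended reading of the statement: $R_{d+1}(v',\epsilon_0)$ has RMSE $\epsilon_0$, and Chebyshev at deviation $2\epsilon_0$ produces the $4\epsilon_0^2$ term. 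The cost claim is immediate, since each child uses $r$ copies and $r = O(\log(deg\,N/\epsilon_0))$.

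The main subtlety is independence. The union bound needs only the marginal failure probabilities, so no independence across children is required. Within a child, however, Hoeffding in Lemma~\ref{lem:median.bridge} needs the $r$ copies to be independent, and this holds because they are fresh runs of the unitary estimator. A second point to check is the assumption $\eta \leq 1$, i.e.\ that the logarithm in $r$ is positive. This holds whenever $\epsilon_0 \leq N$. Otherwise the claim is trivial, since $|Y_i| \leq N < \epsilon_0$ always.
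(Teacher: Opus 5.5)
Your proof is correct and matches the paper's argument step for step. You apply Lemma~\ref{lem:median.bridge} to each child (deviation $2\epsilon_0$, failure probability $e^{-r/8}\le \epsilon_0^2/(deg\,N^2)$), use that $\max$ and $\min$ are $1$-Lipschitz in the sup-norm to reduce to $\mathbb{E}[\max_i Y_i^2]$, and then apply (\ref{eq:max.highprob}) with $\epsilon = 2\epsilon_0$ and $M = N$ to get $4\epsilon_0^2+\epsilon_0^2$. The hedging in your opening sentence is unnecessary, and its ``replace $2\epsilon_0$ by $\epsilon_0$'' is backwards: the notation $R_{d+1}(v',\epsilon_0)$ already means an estimator with RMSE at most $\epsilon_0$, which is the reading you end up using.
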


\begin{proof}
    \grn{Put $Y_{v'}:= \bar R_{d+1}(v',\epsilon_0) - \gamma_{d+1}(v')$. Lemma \ref{lem:median.bridge} gives $\mathbb{P}(|Y_{v'}| > 2\epsilon_0) \leq e^{-r/8} \leq \frac{\epsilon_0^2}{deg\,N^2}$ for each of the $deg$ children, and $|Y_{v'}| \leq N$ always. Since $|\max_i a_i - \max_i b_i| \leq \max_i |a_i - b_i|$, and likewise for $\min$, the left-hand side is at most $\mathbb{E}[\max_{v'} Y_{v'}^2]$. Now (\ref{eq:max.highprob}) with $\epsilon = 2\epsilon_0$, $\eta = \frac{\epsilon_0^2}{deg\,N^2}$ and $M = N$ bounds this by $4\epsilon_0^2 + deg \cdot \frac{\epsilon_0^2}{deg\,N^2} \cdot N^2 = 5\epsilon_0^2$.}
\end{proof}

\section{The algorithm as a recursive family of unitaries} \label{sec:generalized.algorithm}

{\authorcolour{teal}

\subsection{Assumptions on the layer transitions} \label{subsec:general.algorithm.with.unitaries}

Our core algorithm is about taking individual algorithms that compute the transition from layer $d+1$ to layer $d$ and putting them together efficiently. We are thus assuming that there exists an efficient algorithm for every layer transition. In Section \ref{sec:unitary} we give a concrete class of examples where these efficient transition-algorithms exist. \mrn{We highlight that we expect several other classes where our algorithm can be utilized.}

\grn{In this subsection and the next, every algorithm is a unitary circuit, and we fix the following conventions. A unitary ``at layer $d$'', written $U_d$ or $U_{d,\epsilon}$, acts on three registers: a vertex register holding a basis state $\ket{v}$ with $v \in V(T)_{d-1}$, an output register of $b = O(\log(1/\epsilon))$ qubits \grn{($O(\log(N/\epsilon))$ before the normalization fixed below)} whose basis states we read as fixed-point numbers, and a workspace register that absorbs everything else the circuit produces (the garbage). For a basis state $\ket{v}$ we write $R_d(v)$ for the random variable we would get by measuring the output register of $U_d\ket{v}\ket{0}\ket{0}$ in the computational basis and discarding the workspace. This measurement only names a distribution; we never perform it below the top level (Section \ref{sec:unitary.formulation} gives the precise form). By Definition \ref{def:no.blowup} we shift and rescale the leaf values once and for all so that every $\gamma_d(v)$ lies in $[0,1]$; an RMSE of $\epsilon$ in the original units is an RMSE of $\epsilon/\grn{(CN)}$ after this rescaling. Finally, every unitary we construct clips its output register to $[0,1]$ by reversible arithmetic before returning. All targets lie in $[0,1]$, so clipping never increases $|R_d(v) - \gamma_d(v)|$, and it gives the crude bound $|R_d(v) - \gamma_d(v)| \leq 1$ that the deterministic step needs for its failure event.} \mrn{ We recall that, by Assumption \ref{ass:generalized.Lipschitz}, all $G_d$ are $L$-Lipschitz continuous.}

\begin{assumption} \label{ass:non-det.value.passing.is.linear}
    If $type(\grn{d-1}) = non\text{-}det$, we assume that $G_{\grn{d}}(v, x_1, \dots, x_{deg})$ is linear in the vector $(x_1, \dots, x_{deg})$\grn{, that is,
    \[ G_d(v, x_1, \dots, x_{deg}) = \sum_{i=1}^{deg} c_i(v)\, x_i \]
    for known coefficients $c_i(v)$. We write $\Lambda_d:= \max_{v \in V(T)_{d-1}} \sum_{i=1}^{deg} |c_i(v)|$. The $\ell^1$-Lipschitz constant of a linear functional is $\max_i |c_i(v)|$, so $\Lambda_d \leq deg \cdot L$ always}.
\end{assumption}

An important example that satisfies assumption \ref{ass:non-det.value.passing.is.linear} is when $G_{\grn{d}}$ is the mean, that is
\[ G_{\grn{d}}(v, x_1, \dots, x_{deg}) = \sum_{i = 1}^{deg} \frac{x_i}{deg}\grn{, \qquad \Lambda_d = 1}. \]
\grn{By the normalization fixed above, all values $\gamma_d(v)$ lie in $[0,1]$.}

\begin{assumption} \label{ass:deterministic.step.unitary}
    If $type(d-1) = det$, there exists an algorithm denoted $ConstructUnitary\_d$ with the following properties:
    \begin{enumerate}
        \item The input of $ConstructUnitary\_d$ is a unitary $U_{d+1}$ \grn{at layer $d+1$, that is, one acting on $V(T)_{d}$, an output and a garbage register,} and a positive number $\epsilon > 0$.

        \item The output is a unitary $U_{d, \epsilon}$ that has three registers \mrn{as outlined above}: One register goes across the vertices $V(T)_{d-1}$; one register is the output register; one register is the garbage register.

        \item For $v \in V(T)_{d-1}$, let $R_d(v, \epsilon)$ denote the random variable obtained from measuring the output register of $U_{d, \epsilon} \ket{v} \ket{0} \ket{0}$ and $R_{d+1}(v')$ denote the random variable obtained from measuring the output register of $U_{d+1} \ket{v'} \ket{0} \ket{0}$.
        \grn{A constant $\kappa_d \geq 1$ belongs to $ConstructUnitary\_d$, with the following property. Suppose there are numbers $(y_{v'})_{v' \in V(T)_d}$ in $[0,1]$ such that for every $v' \in V(T)_d$
        \[ \mathbb{E}\big[(R_{d+1}(v') - y_{v'})^2\big] \leq \Big(\frac{\epsilon}{\kappa_d}\Big)^2 \qquad\text{and}\qquad |R_{d+1}(v') - y_{v'}| \leq 1 \text{ always}. \]
        Then for every $v \in V(T)_{d-1}$ with children $w_1, \dots, w_{deg}$,
        \[ \mathbb{E}\big[(R_d(v,\epsilon) - G_d(v, y_{w_1}, \dots, y_{w_{deg}}))^2\big] \leq \epsilon^2. \]}

        \item The unitary $U_{d,\epsilon}$ has to call the unitary $U_{d+1}$ \grn{(or its inverse, or a controlled version)} at most \grn{$q_d(\epsilon)$} many times\grn{, where $q_d$ is a non-increasing function of $\epsilon$ that belongs to $ConstructUnitary\_d$. In the instances we care about, $q_d(\epsilon) = O(deg^{\alpha} \cdot polylog(deg, \frac{1}{\epsilon}))$, with $\alpha = \frac12$ for a search-based step and $\alpha = 1$ for a step that evaluates all children} \vio{Here $polylog(deg, \frac{1}{\epsilon})$ stands for a product of powers of $\log (deg)$ and $\log(\frac{1}{\epsilon})$ whose degree is fixed by the subroutine.}.
    \end{enumerate}
\end{assumption}

\begin{example}[Instances of Assumption \ref{ass:deterministic.step.unitary}] \label{rem:det.instances}
    \grn{(a) \emph{Evaluate all children.} For any $G_d$ that is $L$-Lipschitz with respect to $\ell^1$, the unitary that applies $U_{d+1}$ to each child on its own workspace and computes $G_d$ by reversible arithmetic satisfies the assumption with $q_d = deg$ and $\kappa_d = deg\,L$. By (\ref{eq:generalized.Lipschitz.assumption}) and Lemma \ref{lem:CS.application},
    \[ \mathbb{E}\big[(G_d(v,R_{d+1}(w)) - G_d(v,y_w))^2\big] \leq L^2\, deg \sum_{w} \mathbb{E}\big[(R_{d+1}(w)-y_w)^2\big] \leq deg^2 L^2 \Big(\frac{\epsilon}{\kappa_d}\Big)^2 = \epsilon^2. \]
    This instance is always available and gives no speedup in $deg$.
    (b) \emph{Evaluate all children, extremum with median boost.} For $G_d \in \{\max,\min\}$, replace each child by the median of $r = O(\log(deg/\epsilon))$ independent copies before taking the extremum. This gives $\kappa_d = \sqrt{5}$ and $q_d = deg \cdot r$; it is Lemma \ref{lem:det.step.no.deg}. Without the boost, Lemma \ref{lem:max.application} gives $\kappa_d = \sqrt{deg}$ and $q_d = deg$, and by Remark \ref{rem:max.lemma.is.tight} this $\sqrt{deg}$ cannot be avoided under a mean-square hypothesis.
    (c) \emph{Search.} For $G_d \in \{\max,\min\}$, Algorithm \ref{alg:min-max-binary-search} satisfies the assumption with $\kappa_d = 8$ and $q_d(\epsilon) = O\big(\sqrt{deg}\,\log(\tfrac{1}{\epsilon})\log(\tfrac{\log(1/\epsilon)}{\epsilon})\log (deg) \big)$. This is Lemma \ref{lem:minmax.satisfies.assumption}, and it is the instance that gives the quadratic speedup in $deg$.}
\end{example}

Assumption \ref{ass:deterministic.step.unitary} above covers how we will handle the deterministic steps. The assumption below deals with the non-deterministic step. As we will show right after, the assumption below is in fact always satisfied. We still phrase it as an assumption because it gives greater clarity on what our algorithm uses and it illustrates what one has to do for Assumption \ref{ass:deterministic.step.unitary} when building new applications of our algorithm.

\begin{assumption} \label{ass:non-deterministic.step.unitary}
    If $type(\grn{d-1}) = non\text{-}det$ \grn{(so $G_d$ is linear by Assumption \ref{ass:non-det.value.passing.is.linear})}, there exists an algorithm denoted $ConstructUnitary\_non\_det\_\grn{d}$ with the following properties:
    \begin{enumerate}
        \item The input of $ConstructUnitary\_non\_det\_\grn{d}$ is a unitary $U_{\Delta}$ \grn{at layer $d+1$, that is, one acting on $V(T)_{d}$, an output and a garbage register,} \grn{a number $s > 0$,} and a positive number $\epsilon > 0$.

        \item The output is a unitary $A_{\grn{d}, \epsilon}$ at layer $d$, that is: One register goes across the vertices $V(T)_{\grn{d-1}}$; one register is the output register; one register is the garbage register.

        \item For $v \in V(T)_{\grn{d-1}}$, let $A_{\grn{d}, \epsilon}(v)$ denote the random variable obtained by measuring the output register of $A_{\grn{d}, \epsilon} \ket{v} \ket{0} \ket{0}$. Let $\Delta(w)$ denote the random variable obtained by measuring the output register of $U_{\Delta}$ when applied to $\ket{w} \ket{0} \ket{0}$ for $w \in V(T)_{\grn{d}}$.

        \grn{Suppose that $\mathbb{E}[\Delta(w)^2] \leq s^2$ for every $w \in V(T)_d$. Then for every $v \in V(T)_{d-1}$ with children $w_1, \dots, w_{deg}$,
        \[ \mathbb{E}\Big[\big(A_{d,\epsilon}(v) - G_d\big(v, \mathbb{E}[\Delta(w_1)], \dots, \mathbb{E}[\Delta(w_{deg})]\big)\big)^2\Big] \leq \epsilon^2. \]}

        \item The unitary $A_{\grn{d},\epsilon}$ has to call the unitary $U_{\Delta}$ \grn{(or its inverse, or a controlled version)} at most \grn{$c_Q \cdot \frac{\Lambda_d\, s}{\epsilon}\log\big(\frac{\Lambda_d\, s}{\epsilon}\big)$} many times\grn{, where $c_Q$ is an absolute constant}.
    \end{enumerate}
\end{assumption}

\begin{lemma} \label{lem:non.det.assumption.is.satisfied}
    If $type(d-1) = non-det$ and $G_d$ is linear, then Assumption \ref{ass:non-deterministic.step.unitary} is always satisfied with $s = \sup_{w \in V(T)_{d}} \left( \sqrt{ \mathbb{E}[\Delta(w)^2]} \right)$. The algorithm is an application of the Algorithm in {\cite[Corollary 3.2]{sunOptimalQuantumSpeedups2026}}.
\end{lemma}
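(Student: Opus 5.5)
We construct $A_{d,\epsilon}$ explicitly. By linearity (Assumption \ref{ass:non-det.value.passing.is.linear}) the target
\[ G_d\big(v, \mathbb{E}[\Delta(w_1)], \dots, \mathbb{E}[\Delta(w_{deg})]\big) = \sum_{i=1}^{deg} c_i(v)\, \mathbb{E}[\Delta(w_i)] \]
is the mean of a single random variable. Write $p_i(v) := |c_i(v)|/\Lambda_v$ with $\Lambda_v := \sum_i |c_i(v)| \leq \Lambda_d$. Define
\[ Z(v) := \Lambda_v\, \mathrm{sgn}(c_I(v))\, \Delta(w_I), \qquad I \sim p(v). \]
Then $\mathbb{E}[Z(v)]$ equals the target. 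For the mean, $p$ is uniform and $\Lambda_v = 1$.

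\textbf{Step 1 (the synthesizer).} The unitary $S_v$ for $Z(v)$ acts as follows:
\begin{itemize}
\item prepare $\sum_i \sqrt{p_i(v)}\ket{i}$, which is possible because the coefficients are known;
\item compute the child $w_i$ coherently from $\ket{v}\ket{i}$;
\item apply $U_\Delta$ to $\ket{w_i}\ket{0}\ket{0}$;
\item multiply the output register by $\Lambda_v\,\mathrm{sgn}(c_i(v))$ using reversible arithmetic.
\end{itemize}
Measuring the output register of $S_v\ket{v}\ket{0}\ket{0}$ gives exactly $Z(v)$, because the index register and the workspace of $U_\Delta$ are simply garbage. This is permitted by the synthesizer model of \cite[Def.~2.2]{kothariMeanEstimationSourceCode2023}. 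One call to $S_v$ costs one call to $U_\Delta$, and inverses and controlled versions cost the same.

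\textbf{Step 2 (the second moment).} Conditioning on $I$ gives
\[ \mathbb{E}[Z(v)^2] = \Lambda_v^2 \sum_i p_i(v)\, \mathbb{E}[\Delta(w_i)^2] \leq \Lambda_d^2 s^2. \]
So $Z(v)$ has standard deviation at most $\Lambda_d s$, uniformly in $v$.

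\textbf{Step 3 (apply the mean estimator).} We feed $S_v$ to the quantum mean estimator of \cite[Corollary 3.2]{sunOptimalQuantumSpeedups2026}, which rests on \cite{kothariMeanEstimationSourceCode2023}. Given a second-moment bound $\sigma^2$ on $Z$, it returns an estimate with RMSE at most $\epsilon$ using $O\!\left(\frac{\sigma}{\epsilon}\log\frac{\sigma}{\epsilon}\right)$ calls. We set $\sigma = \Lambda_d s$. This uses only the stated bound on $\mathbb{E}[\Delta(w)^2]$, with no boundedness assumption, and that is precisely the regime Corollary 3.2 is designed for.

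The main point to check is that the whole circuit is one fixed unitary, with $v$ held in a register. The estimator's schedule depends only on $\sigma$ and $\epsilon$, never on $v$, so we can run it controlled on $\ket{v}$, and $A_{d,\epsilon}\ket{v}\ket{0}\ket{0}$ then has output distribution equal to the estimator run on $S_v$. The RMSE guarantee therefore holds for every $v$, with constant $c_Q$ inherited from the corollary. If that corollary is stated for a second-moment bound around the mean rather than around $0$, this is harmless: the variance is at most $\mathbb{E}[Z^2]$.
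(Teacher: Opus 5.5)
Your proposal is correct and follows the paper's proof essentially step for step. It uses the same importance-sampled variable $\Lambda_v\,\mathrm{sgn}(c_J(v))\,\Delta(w_J)$ with $J$ drawn with probability $|c_J(v)|/\Lambda_v$, the same synthesizer $U_\Delta U_c$ followed by reversible multiplication, the same second-moment bound $\Lambda_d^2 s^2$, and the same call to \cite[Cor.~3.2]{sunOptimalQuantumSpeedups2026} with second-moment parameter $\Lambda_d s$. Your closing point, that the estimator runs as one $v$-independent circuit, is something the paper does not argue inside this proof; it treats it separately in Lemma~\ref{lem:coherent.composition}.
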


\begin{proof}
    Since $G_d$ is a linear map $\mathbb{R}^{deg} \rightarrow \mathbb{R}$, we can write
    \[ G_d(v, x_1, \dots, x_{deg}) = \sum_{i=1}^{deg} c_i(v) x_i. \]
    Set $s := \sup_{w \in V(T)_{d}} \left( \sqrt{ \mathbb{E}[\Delta(w)^2]} \right)$. Now let $X$ be the random variable defined as follows: It chooses an index $J \in \{ 1, \dots, deg \}$ with probability $\frac{ \vert c_J(v) \vert}{\Lambda_v}$, where $\Lambda_v:= \sum_i |c_i(v)| \leq \Lambda_d$ and, conditional on the choice of $J$, we set
    \[ X := \Lambda_v \mathrm{sgn}(c_J(v)) \Delta(w_J). \]
    This random variable satisfies
    \[ \mathbb{E}[X] = \sum_{J=1}^{deg} \frac{ \vert c_J(v) \vert}{\Lambda_v} \Lambda_v \mathrm{sgn}( c_J(v) ) \mathbb{E}[\Delta(w_J)] = \sum_{J=1}^{deg} c_J(v) \mathbb{E}[\Delta(w_J)] = G_d(v, \mathbb{E}[\Delta(w)]), \]
    \[ \mathbb{E}[X^2] = \sum_{J=1}^{deg} \frac{ \vert c_J(v) \vert}{\Lambda_v} \Lambda_v^2 \mathrm{sgn}( c_J(v) )^2 \mathbb{E}[\Delta(w_J)^2] \leq \sum_{J=1}^{deg} \vert c_J(v) \vert \Lambda_v s^2 \leq \Lambda_d^2 s^2. \]
    The random variable $X$ has a synthesizer given by $U_{\Delta} U_c$, where $U_c \ket{v} \ket{0} = \sum_{J=1}^{deg} \sqrt{\frac{\vert c_J(v) \vert}{\Lambda_v}} \ket{v} \ket{w_J}$, followed by reversible multiplication of the output by $\Lambda_v \mathrm{sgn}(c_J(v))$. We now apply the algorithm of {\cite[Cor. 3.2]{sunOptimalQuantumSpeedups2026}} with second-moment bound $\Lambda_d s$, which gives us an approximation of $\mathbb{E}[X] = G_d(v, \mathbb{E}[\Delta(w)])$ with the level of accuracy required by point 3 and the query-complexity required by point 4 in Assumption \ref{ass:non-deterministic.step.unitary}. Note that the algorithm of {\cite[Cor. 3.2]{sunOptimalQuantumSpeedups2026}} gives a query-complexity of $O(\frac{s'}{\epsilon} \log(\frac{s'}{\epsilon})$, where $s'^2 \geq \mathbb{E}[X^2]$ for the random variable $X$ whose mean is approximated. In our case $s' = \Lambda_d s$, which is why the $\Lambda_d$ appears in our bound of the query-complexity.
\end{proof}

Note that the dependence of the algorithm constructed in Lemma \ref{lem:non.det.assumption.is.satisfied} of the function $G_d$ is expressed fully in terms of the $c_i(v)$. In other words, we can extract an algorithm from the Lemma which takes the weights $c_i(v)$ as an input and it outputs an algorithm $ConstructUnitary\_non\_det\_d$ that satisfies Assumption \ref{ass:non-deterministic.step.unitary} for the function $G_d$ that is defined by the weights $c_i(v)$ and $s$ as in the Lemma.

\mrn{The algorithm given in {\cite[Cor 3.2]{sunOptimalQuantumSpeedups2026}} requires that one has access to "the code of $X$", where $X$ is the random variable from the proof above. We will discuss in Appendix \ref{app:coherent.composition} what this does and does not require.}

Under these assumptions, consider the following \grn{algorithm, which is} a recursive construction of one unitary.

\begin{algorithm}
\caption{BuildUnitaryEstimator(d, $\epsilon$) \grn{(abstract version)}} \label{alg:generalized.unitary.estimator.constructor}

\begin{algorithmic}[1]
    \If{d = D + 1}
        \State Define the unitary $U_{d,\epsilon}$ by $U_{d, \epsilon} \ket{v} \ket{0} \ket{0}:= \ket{v} \ket{val(v)} \ket{0}$
        \State \Return $U_{d,\epsilon}$
    \EndIf

    \If{$type(d-1) = det$} \Comment{$U_{d,\epsilon}$ acts on vertices of depth $d-1$; the step to their children has type $type(d-1)$.}

        \State $U_{d+1} \gets BuildUnitaryEstimator(d+1,\grn{\epsilon/\kappa_d})$

        \State $U_{d, \epsilon} \gets ConstructUnitary\_d(U_{d+1}, \epsilon)$\grn{, followed by clipping the output register to $[0,1]$}

        \State \Return $U_{d,\epsilon}$

    \EndIf

    \If{$type(d-1) = non\text{-}det$}
        \State $B_{\grn{d}} \gets \grn{\max\Big(0,\; \grn{\Big\lceil} 2 \log_2\Big( \frac{ 2 \sqrt{2}\, deg L }{\epsilon} \Big) \grn{\Big\rceil}\Big)}$
        \For{$0 \leq n \leq B_{\grn{d}}$}
            \State $U_{d+1,n} \gets BuildUnitaryEstimator(d+1, 2^{-\frac{n}{2}})$

            \State $U_{\Delta,n} \gets$ the unitary that applies $U_{\grn{d+1},n}$ and $U_{\grn{d+1},n-1}$ on separate workspaces, unitarily subtracts the result of $U_{\grn{d+1}, n-1}$ from the result of $U_{\grn{d+1},n}$ and puts this in the second register, putting all garbage and intermediate outputs in the last register. \Comment{If $n=0$, we simply set $U_{\Delta,0} \gets U_{\grn{d+1},0}$.}

            \State \grn{$s_n \gets 3 \cdot 2^{-n/2}$} \Comment{\vio{second-moment bound for $U_{\Delta,n}$: $\mathbb{E}[\Delta_n(w)^2] \leq s_n^2$ by (\ref{eq:unitary.second.momentum})}}

            \State $A_{d,n} \gets ConstructUnitary\_non\_det\_d\big(U_{\Delta,n}, \grn{s_n,}\ \grn{\frac{\epsilon}{3(B_d+1)}}\big)$


        \EndFor
        \State $U_{d, \epsilon}$ is defined as the unitary that takes a $\ket{v}, v \in V(T)_{\grn{d-1}}$ as input and for every $n$, applies $A_{d,n}$ to $\ket{v}$ on a separate workspace\grn{,}  coherently computes their sum in the output register\grn{, and clips it to $[0,1]$}. 

        \State \Return $U_{d, \epsilon}$

    \EndIf
\end{algorithmic}
\end{algorithm}

\subsection{Estimating the RMSE and query complexity} \label{subsec:induction.steps}

\begin{theorem}[Guarantee for Algorithm \ref{alg:generalized.unitary.estimator.constructor}] \label{thm:unitary.general}
    \grn{Suppose $T$ and $(G_d)_d$ have no linear blowup and are normalised so that all $\gamma_d(v) \in [0,1]$, and suppose Assumptions \ref{ass:non-det.value.passing.is.linear}, \ref{ass:deterministic.step.unitary} and \ref{ass:non-deterministic.step.unitary} hold, with $\Lambda_j \geq 1$ at every non-deterministic step. \mrn{Suppose in addition that $q_j(\epsilon) = deg^{\alpha} \; polylog(deg, \frac{1}{\epsilon})$, where $polylog(deg, \frac{1}{\epsilon})$ is a polynomial of degree $\beta$.} \vio{Recall the three parameters. At a deterministic step $j$, $\kappa_j \geq 1$ is the factor by which the children must be more accurate than the parent and $q_j(\cdot)$ is the number of calls to the child unitary (Assumption \ref{ass:deterministic.step.unitary}, points 3 and 4). At a non-deterministic step $j$, $\Lambda_j = \max_{v} \sum_i |c_i(v)|$ is the $\ell^1$ norm of the coefficients of the linear $G_j$ (Assumption \ref{ass:non-det.value.passing.is.linear}), equal to $1$ for any probability vector.} Let $0 < \epsilon \leq 1$ and put} $B_j:= \grn{\Big\lceil} 2\log_2\Big(\frac{2\sqrt2\,deg L}{\epsilon_{\min\vio{,j}}}\Big) \grn{\Big\rceil}$.
    Let $U_{d,\epsilon} = BuildUnitaryEstimator(d,\epsilon)$ and, for $v \in V(T)_{d-1}$, let $R_d(v,\epsilon)$ be the random variable obtained by measuring the output register of $U_{d,\epsilon}\ket{v}\ket{0}\ket{0}$. Then
    \[ \mathbb{E}[(R_d(v,\epsilon) - \gamma_d(v))^2] \leq \epsilon^2, \]
    and the number $Q(U_{d,\epsilon})$ of applications of the leaf oracle (or its inverse or controlled versions) inside $U_{d,\epsilon}$ satisfies
    \begin{equation} \label{eq:general.theorem}
        Q_d(U_{d,\epsilon}) = O\left( \frac{1}{\epsilon} deg^{\alpha \frac{D-d}{2}} K^{\frac{D-d}{2}} deg^{\frac{D-d}{2}} L^{\frac{D-d}{2}} \kappa^{\frac{D-d}{2}} (D-d)^{\max(4, \beta)(D-d)} \log\left( \frac{deg L \kappa}{\epsilon}\right)^{\max(4, \beta)(D-d)} \right).
    \end{equation}
\end{theorem}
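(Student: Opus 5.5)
We will prove both claims, the RMSE bound and the query bound, by a single downward induction on $d$, from $d = D+1$ to $d = 1$. Each inductive step consumes one deterministic step or one non-deterministic step, according to $type(d-1)$. The base case $d = D+1$ is trivial: $U_{D+1,\epsilon}$ writes $val(v)$ exactly, so the error is $0$ and one oracle call suffices.

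\emph{Deterministic step.} Suppose $type(d-1) = det$. By the induction hypothesis, $U_{d+1} = BuildUnitaryEstimator(d+1,\epsilon/\kappa_d)$ has RMSE at most $\epsilon/\kappa_d$ about $y_{v'} := \gamma_{d+1}(v')$. Clipping gives $|R_{d+1}(v')-y_{v'}|\le 1$. Point 3 of Assumption~\ref{ass:deterministic.step.unitary} then yields RMSE $\le\epsilon$ for $G_d(v,\gamma_{d+1}(w)) = \gamma_d(v)$, and clipping to $[0,1]$ does not increase it. For the cost, point 4 gives
\[ Q_d(\epsilon) \le q_d(\epsilon)\, Q_{d+1}(\epsilon/\kappa_d). \]

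\emph{Non-deterministic step.} Suppose $type(d-1) = non\text{-}det$. Write $R^{(n)} := R_{d+1}(w,2^{-n/2})$, so that $R^{(n)}$ has RMSE at most $2^{-n/2}$. The level difference $\Delta_n = R^{(n)} - R^{(n-1)}$ then satisfies, by Lemma~\ref{lem:CS.application},
\[ \mathbb{E}[\Delta_n^2] \le 2\bigl(2^{-n} + 2^{-(n-1)}\bigr) = 6\cdot 2^{-n} \le 9\cdot 2^{-n} = s_n^2, \]
and the case $n=0$ follows from $|R^{(0)}|\le 1$. By Assumption~\ref{ass:non-deterministic.step.unitary} and linearity of $G_d$, the sum $\sum_n A_{d,n}$ estimates the telescoped quantity $G_d(v,\mathbb{E}[R^{(B_d)}(w)])$. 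Its total error splits into two parts.
\begin{itemize}
\item[(i)] The sum of the per-level estimation errors. By the triangle inequality in $L^2$ this is at most $(B_d+1)\cdot\epsilon/(3(B_d+1)) = \epsilon/3$.
\item[(ii)] The deterministic bias $|G_d(v,\mathbb{E}R^{(B_d)}(w)) - \gamma_d(v)|$. By the $\ell^1$-Lipschitz bound and Jensen this is at most $deg\,L\,2^{-B_d/2}$, which is at most $\epsilon/(2\sqrt2)$ by the choice of $B_d$.
\end{itemize}
Adding the two gives error below $\epsilon$. Note that the recursive calls use the fixed accuracies $2^{-n/2}$, independent of $\epsilon$, so the induction hypothesis applies to them directly.

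The cost of the non-deterministic step is
\[ Q_d(\epsilon) \le \sum_{n=0}^{B_d} c_Q\, \frac{\Lambda_d s_n (B_d+1)}{\epsilon}\, \log(\cdot)\, \bigl(Q_{d+1}(2^{-n/2}) + Q_{d+1}(2^{-(n-1)/2})\bigr). \]
This is the multilevel cancellation. The inductive form of \eqref{eq:general.theorem} gives $Q_{d+1}(2^{-n/2}) = O(2^{n/2}\,\Phi_{d+1})$, where $\Phi_{d+1}$ collects the $\epsilon$-independent factors. Then $s_n\, Q_{d+1}(2^{-n/2}) = O(\Phi_{d+1})$ uniformly in $n$, and the sum contributes only a factor $(B_d+1)^2\log(\cdot) = O(\log^3(deg\,L/\epsilon))$.

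\emph{Closing the recursion.} We then compose one det/non-det pair into a two-step recursion. The $\epsilon$ entering the deterministic step is $2^{-n/2}$ divided by $\kappa$, and $\epsilon/\kappa$ enters only through $q$'s polylog, which is at most $\log^\beta$. So each pair of layers multiplies the bound by
\[ O\bigl(deg^{\alpha}\,\kappa\,\Lambda\,\log^{\max(4,\beta)}(\cdot)\bigr). \]
Here $\Lambda \le deg\,L$, and one factor $\kappa$ appears because the deterministic step rescales the $1/\epsilon$ prefactor.

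\emph{Main obstacle.} The step I expect to be hardest is the bookkeeping of the logarithms. Deep in the recursion the accuracies shrink to about $\epsilon/(deg\,L\,\kappa)^{D-d}$, so each log argument grows. We will bound every such logarithm uniformly by
\[ (D-d)\,\log\!\left(\frac{deg\,L\,\kappa}{\epsilon}\right) \]
(using $\log(x^k) = k\log x$). This uniform bound is what produces the factor $(D-d)^{\max(4,\beta)(D-d)}$ together with the stated power of $\log$. The constants $c_Q$, the factor $3$ and similar ones are absorbed into $K$. With this in place, we verify by induction the slightly stronger claim that the bound holds with $1/\epsilon$ factored out exactly, which is needed in the multilevel step above.
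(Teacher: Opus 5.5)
Your proposal is correct and follows essentially the paper's route: a backward induction in which the deterministic step is points 3 and 4 of Assumption~\ref{ass:deterministic.step.unitary}, and the chance step uses the level-difference bound $\mathbb{E}[\Delta_n^2]\le 6\cdot 2^{-n}\le s_n^2$, telescoping plus Jensen for the bias, and the cancellation $s_n 2^{n/2}=3$ so that every level costs the same; as in the paper, all logarithms are evaluated at the smallest accuracy reached (the paper's non-increasing $\tilde Q$, or $M_d/\delta$ on $[\epsilon_{\min},1]$), and the only differences are cosmetic, such as Minkowski in place of Lemma~\ref{lem:CS.application} and clipping instead of $\Gamma$ for $n=0$. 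One harmless slip: a det/non-det pair contributes $\log^{\beta+3}$ (from $q_d$, $(B_d+1)^2$ and the estimator's own logarithm), not $\log^{\max(4,\beta)}$, but this still fits inside the $\log^{2\max(4,\beta)}$ per pair that the stated exponent $\max(4,\beta)(D-d)$ allows.
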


We present a sketch of the proof here. The full proof is in Appendix \ref{sec:proof.2.of.main.theorem}.

\begin{proof}[Proof sketch]
\grn{We use backward induction on $d$ with hypothesis $P(d)$: for all $\delta \in [\epsilon_{\min},1]$ and $v \in V(T)_{d-1}$, $\mathbb{E}[(R_d(v,\delta)-\gamma_d(v))^2] \leq \delta^2$ and $Q(U_{d,\delta}) \leq M_d/\delta$. At a deterministic layer, $P(d+1)$ at accuracy $\delta/\kappa_d$ is the hypothesis of point 3 of Assumption \ref{ass:deterministic.step.unitary} with $y_{v'} = \gamma_{d+1}(v')$. This gives the RMSE bound, and point 4 gives $M_d = q_d(\epsilon_{\min})\,\kappa_d\, M_{d+1}$. At a non-deterministic layer, $P(d+1)$ gives $\mathbb{E}[\Delta_n(w)^2] \leq 2\cdot 2^{-n} + 2\cdot 2^{-(n-1)} = 6 \cdot 2^{-n} \leq s_n^2$, because both estimators inside $\Delta_n$ are compared to the common target $\gamma_{d+1}(w)$ and never to each other. The expectations telescope, $\sum_{n \leq B_d} \mathbb{E}[\Delta_n(w)] = \mathbb{E}[R_{d+1}(w,2^{-B_d/2})]$, so by linearity of $G_d$ the targets of the $A_{d,n}$ sum to $G_d(v,\mathbb{E}[R_{d+1}(w,2^{-B_d/2})])$, which is within $\Lambda_d 2^{-B_d/2}$ of $\gamma_d(v)$ by Jensen's inequality. Lemma \ref{lem:CS.application} then bounds the mean square error by $2(B_d+1)^2\epsilon'^2 + 2\Lambda_d^2 2^{-B_d} = (\frac29 + \frac14)\delta^2$. For the cost, level $n$ makes $O(\Lambda_d s_n (B_d+1)\delta^{-1}\log(\cdot))$ calls to $U_{\Delta,n}$, each costing $O(M_{d+1}2^{n/2})$. The factor $2^{-n/2}$ in $s_n$ cancels the $2^{n/2}$, so all $B_d+1$ levels cost the same and $M_d = 16 c_Q \Lambda_d (B_d+1)^2 \log(9\Lambda_d(B_d+1)/\epsilon_{\min})\, M_{d+1}$. Unwinding the recursion gives the product.}
\end{proof}

}

\mrn{The proof of Theorem \ref{thm:unitary.general} relies on the fact that we compose the recursive process coherently, that is, the process that passes from any layer to the next one involves no measurement and that for any layer $d-1$, we build one unitary $U_{d,\epsilon}$ that yields a good estimator for every vertex $v \in V(T)_{d-1}$. The fact that this works is the content of the following Lemma, which we prove in Appendix \ref{app:coherent.composition}. 

\begin{lemma}[Coherent composition] \label{lem:coherent.composition}
    \grn{Let $\mathcal{C}$ be a quantum circuit on three registers $\ket{v}\ket{\omega}\ket{g}$: a vertex register, an \emph{output} register whose computational basis states $\ket{\omega}$ we read as fixed-point numbers, and a workspace register. So $\omega$ ranges over the finite grid $\Omega_b \subset \mathbb{R}$ of $b$-bit fixed-point numbers, $X_v$ below is real-valued, and $\omega$ is a number produced by the arithmetic inside $\mathcal{C}$, not the value of a vertex. For a basis state $\ket{v}$ write}
    \[ \grn{\mathcal{C}\ket{v}\ket{0}\ket{0} \; = \; \ket{v} \sum_{\omega} \sqrt{p^{(v)}_\omega}\, \ket{\omega}\ket{g^{(v)}_\omega}, \qquad \lVert g^{(v)}_\omega \rVert = 1,} \]
    \grn{where the sum runs over the basis states of the output register and $\ket{g^{(v)}_\omega}$ collects everything else the circuit produced. Define $X_v$ as the random variable with $\mathbb{P}[X_v = \omega] = p^{(v)}_\omega$, that is, the outcome of measuring the output register in the computational basis and reading it as a number. The measurement is a device for naming the distribution whose mean we estimate and not a step of the algorithm: $\mathcal{C}$ is never measured, and (i) below forbids it. In the language of \cite[Def.~2.2 and 2.10]{kothariMeanEstimationSourceCode2023}, $\mathcal{C}$ is a synthesizer for $p^{(v)}$ and the value map is the identity, the label $\omega$ being the value. Suppose}
    \grn{\begin{enumerate}
        \item[(i)] $\mathcal{C}$ contains no measurement, and controlled-$\mathcal{C}$ and controlled-$\mathcal{C}^\dagger$ are available;
        \item[(ii)] $\mathcal{C}$ is one circuit \mrn{that is} independent of $v$, and so is every parameter of the estimator applied to it;
        \item[(iii)] a bound $s^2 \geq \mathbb{E}[X_v^2]$ is known and holds uniformly in $v$.
    \end{enumerate}}
    \grn{Then the estimator of \cite[Cor.~3.2]{sunOptimalQuantumSpeedups2026} applied to $\mathcal{C}$ is a unitary $U$ with}
    \[ \grn{U \ket{v}\ket{0}\ket{0} \; = \; \ket{v} \sum_{a} \sqrt{q^{(v)}_a}\,\ket{a}\ket{h^{(v)}_a},} \]
    \grn{where $a$ ranges over the basis states of the output register of $U$, read in the same fixed-point convention as $\omega$. (We use a different letter because $U$ has its own output register, holding the estimate of $\mathbb{E}[X_v]$, whereas $\omega$ labels one sample of $X_v$ in the output register of $\mathcal{C}$.) Measuring the output register returns an estimate of $\mathbb{E}[X_v]$ with RMSE at most $\varepsilon$, at a cost of $O\!\left(\frac{s}{\varepsilon}\log\frac{s}{\varepsilon}\right)$ applications of $\mathcal{C}$ and $\mathcal{C}^\dagger$. In particular, the conclusion does not change if $\mathcal{C}$ contains subroutines that are themselves approximate, biased, or occasionally failing.}
\end{lemma}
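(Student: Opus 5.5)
The plan is to reduce the statement to the black-box guarantee of \cite[Cor.~3.2]{sunOptimalQuantumSpeedups2026}, itself built on the synthesizer framework of \cite[Def.~2.2, 2.10]{kothariMeanEstimationSourceCode2023}, and to check that hypotheses (i)--(iii) are exactly what that guarantee needs, \emph{simultaneously for every basis state} $\ket{v}$. First, we will fix $v$ and observe that the map $\ket{0}\ket{0}\mapsto \sum_\omega \sqrt{p^{(v)}_\omega}\ket{\omega}\ket{g^{(v)}_\omega}$ is a synthesizer for the distribution $p^{(v)}$, with value map the identity on $\Omega_b$. Garbage is permitted by \cite[Rem.~2.6]{kothariMeanEstimationSourceCode2023}. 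By (i), the synthesizer together with its inverse and controlled versions is available as a unitary. By (iii), $\mathbb{E}[X_v^2]\le s^2$. Hence the estimator of \cite[Cor.~3.2]{sunOptimalQuantumSpeedups2026}, run with parameters $(s,\varepsilon)$ on this synthesizer, returns an estimate of $\mathbb{E}[X_v]$ with RMSE at most $\varepsilon$ using $O(\frac{s}{\varepsilon}\log\frac{s}{\varepsilon})$ calls.

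The second step is to lift this from a fixed $v$ to a single unitary $U$. Write $\mathcal{C}=\sum_v \ket{v}\!\bra{v}\otimes \mathcal{C}_v$, which holds because $\mathcal{C}$ leaves the vertex register in its basis state. The estimator is a fixed circuit $E[\cdot]$ that calls its oracle and also performs oracle-independent gates on the remaining registers. By (ii), neither this circuit nor its parameters $s,\varepsilon$, schedule and number of calls depend on $v$. Therefore we can substitute $\mathcal{C}$ for the oracle, keeping the vertex register as a shared control, and by induction over the gates we get
\[ E[\mathcal{C}] = \sum_v \ket{v}\!\bra{v}\otimes E[\mathcal{C}_v]. \]
Applying this to $\ket{v}\ket{0}\ket{0}$ gives $\ket{v}\otimes E[\mathcal{C}_v]\ket{0}\ket{0}$, which has the claimed form. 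Its output distribution $q^{(v)}$ is exactly the distribution the fixed-$v$ analysis certifies, so the RMSE and query bounds transfer verbatim.

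The step we expect to be the main obstacle, and the one to treat carefully, is the role of (i) and (ii) in this block-diagonal identity. If $\mathcal{C}$ contained a measurement, the Grover-type reflections $\mathcal{C}S_0\mathcal{C}^\dagger$ inside the estimator would no longer implement the rotation that the analysis assumes. If the number of calls or any parameter depended on $v$, as in a variable-time routine like D\"urr--H\o yer, there would be no single circuit $E$ to factor out, and the superposed branches would decohere. In both cases we will point to where the fixed-$v$ analysis uses the intended structure.

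The final sentence of the lemma follows because the argument only uses the law $p^{(v)}$ and the moment bound (iii). It never uses any relation between $X_v$ and a ``true'' vertex value, so internal bias or failures of subroutines inside $\mathcal{C}$ are already absorbed into $p^{(v)}$ and change nothing.
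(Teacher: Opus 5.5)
Your proposal is correct and follows the paper's proof. You fix $v$ and read $\mathcal{C}$ as a garbage-carrying synthesizer with identity value map, so that (i) and (iii) feed \cite[Cor.~3.2]{sunOptimalQuantumSpeedups2026} directly; you use (ii) to get one circuit returning $\ket{v}\otimes(\text{output for } v)$ on every basis state; and you absorb biased or failing subroutines into $p^{(v)}$. Your identity $E[\mathcal{C}] = \sum_v \ket{v}\!\bra{v}\otimes E[\mathcal{C}_v]$ is a more explicit version of the paper's one-line linearity step; like the paper, it relies on $\mathcal{C}$ using the vertex register as a control on all inputs, not only on $\ket{v}\ket{0}\ket{0}$, which holds for every circuit the paper builds.
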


}

\section{Coherent implementation for stochastic 2-player games} \label{sec:unitary}

\grn{A stochastic 2-player game has exactly the structure of the tree and the passing functions of Section \ref{sec:setting}. It is an expectiminimax tree in the sense of Definition \ref{def:emm.tree} with $\tau$ alternating between a deterministic layer and a chance layer, and from here on we use the two names interchangeably. If $type(d-1)$ is non-deterministic, $G_d$ is the mean,
\[ G_d(v, x_1, \dots, x_{deg}) = \frac{1}{deg} \sum_{i=1}^{deg} x_i , \]
so Assumption \ref{ass:non-det.value.passing.is.linear} holds with $\Lambda_d = deg L = 1$. For the $d$ with $type(d-1)$ deterministic, $G_d$ alternates between the minimum and the maximum,
\[ G_d(v, x_1, \dots, x_{deg}) = \max(x_1, \dots, x_{deg}), \qquad G_d(v, x_1, \dots, x_{deg}) = \min(x_1, \dots, x_{deg}). \]
In this section we give the two layer algorithms\mrn{: First, we give an extremum search for the deterministic layers and we show that it realises $ConstructUnitary\_d$ from Assumption \ref{ass:deterministic.step.unitary}. Then we explain how Lemma \ref{lem:non.det.assumption.is.satisfied} gives us  and $ConstructUnitary\_non\_det\_d$ of Assumption \ref{ass:non-deterministic.step.unitary}. Finally, we derive the main bound.}}

\subsection{The unitary interface} \label{sec:unitary.formulation}

Every step of the recursion is a unitary rather than a sampling procedure. This is what makes the query count well defined, since setting up a unitary over the leaf oracles does not cost any queries\mrn{, only calling it does, and this} is what lets a chance node draw one child in superposition instead of evaluating all of them. A unitary $U_{d, \epsilon}$ applied to a node $v$ gives
\begin{equation}
U_{d, \epsilon} \ket{v} \ket{0} \ket{0} = \sum_{val} \sqrt{p(val)} \ket{v} \ket{val} \ket{garbage_{val}} ,
\end{equation}
such that measuring the $\ket{val}$ register gives the random variable $R_d(v, \epsilon)$ with the properties we describe. The number of times the leaf oracles are queried is then obtained by counting calls of the unitaries, and this setup is what the mean estimator of Kothari and O'Donnell \cite{kothariMeanEstimationSourceCode2023} requires, since it asks for the `code' of the observable.

\grn{We make this precise. Every $U_{d,\epsilon}$ acts on three registers: a vertex register holding a basis state $\ket{v}$ with $v \in V(T)_{d-1}$; an \emph{output} register of $b = O(\log(N/\epsilon))$ qubits whose basis states we read as $b$-bit fixed-point numbers, so that they form a finite grid $\Omega_b \subset \mathbb{R}$; and a workspace register that absorbs everything else the circuit produces. Grouping the amplitude of $U_{d,\epsilon}\ket{v}\ket{0}\ket{0}$ by the content $\omega \in \Omega_b$ of the output register gives the form displayed above, with
\[ p^{(v)}(\omega) \;=\; \bigl\lVert (\mathrm{Id}\otimes\ket{\omega}\!\bra{\omega}\otimes\mathrm{Id})\,U_{d,\epsilon}\ket{v}\ket{0}\ket{0} \bigr\rVert^2, \]
and we \emph{define} $R_d(v,\epsilon)$ as the real-valued random variable with $\mathbb{P}[R_d(v,\epsilon) = \omega] = p^{(v)}(\omega)$: the outcome of measuring the output register in the computational basis and discarding the workspace. Its values are outputs of the arithmetic inside the circuit, a sum of level estimates at a chance layer or the midpoint of a bracket at a deterministic layer. In general they are not values $val(w)$ of vertices; the one exception is the leaf unitary $U_{D+1,\epsilon}$, whose output register holds $val(v)$ exactly. We measure once, at the top level. The algorithm builds the single unitary $U_{1,\epsilon}$ for the root by the recursion of Algorithm \ref{alg:build-unitary-estimator}, in which $U_{d,\epsilon}$ contains as subcircuits the unitaries $U_{d+1,\delta'}$ of its children at the accuracies $\delta'$ it requests, down to the leaf oracle. It applies $U_{1,\epsilon}$ to $\ket{root}\ket{0}\ket{0}$, measures the output register, and returns the outcome as the estimate. Nothing is sampled at the leaves. The leaf values enter only through the oracle $O_x$, queried in superposition, and the only sources of randomness are $U_c$, which places the children of a chance node in superposition, and the coins of the estimators, which are Hadamard-prepared qubits kept in the workspace. Below the top level a subroutine never measures. Where the classical description says ``draw a sample of $R_{d+1}(w,\delta')$'', the circuit applies $U_{d+1,\delta'}$, or its inverse or a controlled version, once. This correspondence lets us reuse the classical analysis: the RMSE of $R_d(v,\epsilon)$ is a functional of $p^{(v)}$ and hence a property of the unitary. Sampling complexity likewise becomes a property of the circuit. $Q(U_{d,\epsilon})$ is the number of applications of $O_x$, $O_x^\dagger$ and their controlled versions inside $U_{d,\epsilon}$. It is a fixed integer, not an expectation over branches, because the circuit is fixed once $d$, $\epsilon$, $deg$ and $N$ are. It obeys the recursion
\[ Q(U_{d,\epsilon}) \;=\; \sum_{\delta'} \bigl(\text{number of applications of } U_{d+1,\delta'} \text{ inside } U_{d,\epsilon}\bigr) \cdot Q(U_{d+1,\delta'}), \]
which Theorem \ref{thm:unitary.general} solves. The query complexity of the whole algorithm is $Q(U_{1,\epsilon})$, since we apply the root unitary once.}

\grn{We can call the child estimators in superposition because of the following construction.\\ $BuildUnitaryEstimator$ builds $U_{d+1}$ as a unitary circuit with every source of randomness purified into a garbage register and no measurement anywhere inside. So applying it to $\sum_j \alpha_j \ket{j}\ket{0}\ket{0}$ is legitimate and returns $\sum_j \alpha_j\ket{j}\ket{\tilde x_j}\ket{g_j}$ with $\tilde x_j$ an estimate of the value of $w_j \mrn{\in V(T)_d}$. What we cannot call in superposition is an estimator that measures internally and branches on the outcome. This is why D\"urr--H\o yer is unavailable to us, and why we state the whole recursion as a family of unitaries rather than a family of sampling procedures.}

{\authorcolour{Maroon}
\subsection{Deterministic layers} \label{sec:det.step.idea}

The deterministic layers need an extremum over $deg$ children with $O(\sqrt{deg})$ queries, the quadratic speedup that Cleve et al.\ realise for minimax trees \cite{cleveQuantumAlgorithmsEvaluating2019}. The known algorithm for finding a minimum with this speedup, due to D\"urr and H\o yer \cite{durrQuantumAlgorithmFinding1999}, cannot be used here for two reasons. It is not coherent: it measures in order to update its pivot, and these measurements destroy the superposition that the mean estimator above and the extremum search of the next layer need. And it assumes children with exact values, whereas our recursive estimators return a different value on every call, with an RMSE guarantee. We therefore use a more naive algorithm, which is also more flexible, at the price of another logarithmic factor: a binary search over the domain $[0,1]$ of the values, fixed by the normalization of Section \ref{sec:setting}, in which each pivot is tested by amplitude amplification over the children. It starts with a pivot of $\frac12$ and moves the pivot up or down depending on whether a child above (below) it is found; it is given in Algorithm~\ref{alg:min-max-binary-search}.

\grn{First we settle the choice of pivot. \textcite{cleveQuantumAlgorithmsEvaluating2019} avoid a midpoint binary search and pivot on a randomly chosen input value instead, because in the input-value model with a large numerical range the midpoint search need not converge in logarithmically many rounds. That objection does not apply here, because we ask only for an additive $\epsilon$ on a range normalised to $[0,1]$. The bracket length obeys $\lambda_{i+1} = \frac{\lambda_i}{2} + \epsilon$, so it reaches $2\epsilon + \epsilon_2$ after $\lceil \log_2(1/\epsilon_2)\rceil$ rounds however the values are distributed. Pivoting on an input value is needed for exact evaluation, not for approximate evaluation on a bounded range.}

\grn{Next we fix which search to use. Querying the children with the bounded-error search of \cite{hoyerQuantumSearchBoundedError2003} would save a logarithmic factor, but that result assumes each $f_j$ is a well-defined bit computed with two-sided error at most $\frac{1}{10}$. Our predicate ``child $j$ lies above the pivot'' has no well-defined value for a child whose true value sits within the estimator's resolution of the pivot. There the comparison returns $1$ with probability near $\frac12$, and no assignment of $f_j$ satisfies the hypothesis. This configuration is not rare: the binary search drives the pivot towards the extremum, and in the last rounds it is the typical one. \textcite{rallFasterCoherentQuantum2021} call this obstruction the rounding promise. So we boost each child until every comparison whose child lies more than $\epsilon$ from the pivot is reliable, and run ordinary amplitude amplification on the result. This is the ``simple search'' that \textcite{hoyerQuantumSearchBoundedError2003} describe in order to improve on it. Lemma \ref{lem:search.step} records why an ambiguous child does no harm. One could also use the search from \cite{gaoQuantumApproximateKMinimum2025}, which treats approximate minimum finding over noisy values directly; we have not checked whether its interface matches ours.}

\begin{lemma}[Correctness of one bracket step] \label{lem:search.step}
    \grn{Let $\gamma_1, \dots, \gamma_{deg} \in [0,1]$, let $\grn{z} \in [0,1]$, and suppose a unitary reads each child and returns a value $\tilde x_j$ with $|\tilde x_j - \gamma_j| \leq \epsilon$ on an event of probability at least $1-\eta$, where $\eta \leq \frac{1}{\grn{C_0}\,deg}$ for an absolute constant $\grn{C_0}$; the read may carry whatever garbage the boost produces. Let \emph{found} be the flag written by the amplitude-amplification procedure described in the proof, and \emph{not found} its negation. Then, except on an event of probability at most $\delta'$,}
    \[ \grn{\text{found} \; \Longrightarrow \; \max_j \gamma_j > \grn{z} - \epsilon, \qquad \text{not found} \; \Longrightarrow \; \max_j \gamma_j \leq \grn{z} + \epsilon.} \]
    \grn{The same statement with the inequalities reversed holds for $\min$ and the predicate $\tilde x_j < \grn{z}$. In particular the bracket update of Algorithm \ref{alg:min-max-binary-search} preserves the invariant $lb \leq \max_j \gamma_j \leq ub$, and no promise on the position of any $\gamma_j$ relative to $\grn{z}$ is needed. The procedure applies the read unitary $O(\sqrt{deg}\,\log(1/\delta'))$ times.}
\end{lemma}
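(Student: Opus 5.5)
The plan is to analyse the "simple search'' directly. Let $\mathcal{A}$ prepare $\frac{1}{\sqrt{deg}}\sum_j \ket{j}$, apply the read unitary to child $j$, and write the comparison bit $[\tilde x_j > z]$. Let $P$ project onto comparison bit $1$, and let $p := \lVert P\mathcal{A}\ket{0}\rVert^2$ be the marked amplitude squared. Then
\[ p = \frac{1}{deg}\sum_j \mathbb{P}(\tilde x_j > z). \]
I would split the children into three classes according to where $\gamma_j$ sits relative to $z$: \emph{high} ($\gamma_j > z+\epsilon$), \emph{low} ($\gamma_j < z-\epsilon$), and \emph{ambiguous} (otherwise). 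On the good event, a high child compares as $1$ and a low child as $0$. So a high child contributes $\mathbb{P}(\tilde x_j>z)\geq 1-\eta$, a low child contributes at most $\eta$, and an ambiguous child contributes anything in $[0,1]$. The key observation is that the two implications only constrain the extreme cases. If some child is high, then $p \geq (1-\eta)/deg$. If every child is low, then $p \leq \eta \leq 1/(C_0\, deg)$. If there is an ambiguous child, either output is acceptable, because then $z-\epsilon \leq \max_j\gamma_j \leq z+\epsilon$ (when no child is high) or $\max_j\gamma_j > z+\epsilon$ (when one is). I would check each case against the two implications; this is exactly where no promise about ambiguous children is needed.

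The procedure itself would be a fixed-schedule amplitude amplification: use the standard non-adaptive schedule of exponentially increasing iteration counts $k \in \{1,2,4,\dots, O(\sqrt{deg})\}$, with the reflection $\mathcal{A}S_0\mathcal{A}^\dagger$ (garbage included) as discussed before Remark~\ref{rem:max.lemma.is.tight}. After each round, check whether the marked outcome was produced. To stay coherent, the flag is computed into a fresh register rather than measured. Then repeat the whole schedule $O(\log(1/\delta'))$ times and take the OR (or majority) of the flags. This uses the read unitary $O(\sqrt{deg}\log(1/\delta'))$ times, as claimed.

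Then I would bound the two error directions separately.
\begin{enumerate}
\item \textbf{Missed detection.} If some child is high, then $p \geq (1-\eta)/deg \geq 1/(2deg)$. The standard analysis of amplitude amplification with an unknown $p$ (Brassard--H\o yer--Mosca--Tapp, cited as \cite{brassardQuantumAmplitudeAmplification2002}) shows that one pass of the schedule produces a marked outcome with constant probability. Independent repetitions therefore drive the probability of missing it below $\delta'/2$.
\item \textbf{False positive.} If all children are low, I need the flag to be $1$ only rarely. Each iteration rotates by angle $\theta = \arcsin\sqrt p$, so after $k$ iterations the marked probability is $\sin^2((2k+1)\theta) \leq (2k+1)^2 p = O(deg)\cdot\eta$. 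Choosing $C_0$ large makes this at most a small constant per pass. Summing over the passes gives a bound proportional to the number of passes.
\end{enumerate}
The accounting in the false-positive direction has to be done carefully: the per-pass false-positive rate must be small enough that the OR over $O(\log(1/\delta'))$ passes still fails with probability at most $\delta'$. I would handle this by replacing the OR with a majority vote over the passes. A majority vote separates a constant detection probability (say $\geq 1/4$) from a small constant false-positive probability (say $\leq 1/16$) by a Hoeffding bound, with $O(\log(1/\delta'))$ passes.

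Finally, I would derive the bracket invariant. On a "found'' outcome the algorithm sets $lb \gets z-\epsilon$, and on a "not found'' outcome it sets $ub \gets z+\epsilon$. Both updates are justified by the two implications, so $lb\le\max_j\gamma_j\le ub$ is preserved. The $\min$ case follows by applying everything to $1-\gamma_j$. The main obstacle I expect is the false-positive estimate, because of the garbage. The amplification acts on a state whose marked and unmarked components carry entangled workspace, so I must confirm that the two-dimensional rotation picture still holds. It does: the invariant plane is spanned by $P\mathcal{A}\ket0$ and $(1-P)\mathcal{A}\ket0$ whatever the garbage is, so only $p$ enters.
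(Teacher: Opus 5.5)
Your overall approach is the paper's: the garbage-insensitive two-dimensional rotation, a fixed doubling schedule with flags written coherently, a Hoeffding vote over $O(\log(1/\delta'))$ repetitions, and a three-way split according to where $\max_j\gamma_j$ lies. Your false-positive estimate $\sin^2((2k+1)\theta)\le(2k+1)^2p$ is valid, by $|\sin n\theta|\le n|\sin\theta|$, and avoids the paper's factor $\frac{\pi^2}{4}$.

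The step that fails as written is the vote. A majority vote needs the per-pass firing probability to exceed $\frac12$ when a high child exists. If the firing probability is near your guaranteed $\frac14$, only about a quarter of the passes fire, and the majority answers \emph{not found} with probability tending to $1$. Both of your constants, $\frac14$ and $\frac1{16}$, lie below $\frac12$. There are two fixes. You can declare \emph{found} when more than, say, a $\frac{5}{32}$ fraction of passes fire; Hoeffding still needs only $O(\log(1/\delta'))$ passes. Or you can do what the paper does: OR four independent passes into one repetition, and take the majority of $18\lceil\log(2/\delta')\rceil$ repetitions. This lifts detection to $1-(\frac34)^4>\frac23$ while keeping false positives at most $400\,deg\,\eta\le\frac14$, which is where $C_0=1600$ comes from.

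Second, the constant detection probability of one pass does not follow from \cite{brassardQuantumAmplitudeAmplification2002}. Their unknown-$p$ analysis draws the iteration count at random and stops adaptively, whereas you fix $k\in\{1,2,4,\dots\}$. The paper proves the claim directly.
\begin{itemize}
\item If $\theta<\frac{\pi}{6}$, the angles $(2k+1)\theta$ along the schedule grow by a factor below $2$, from $3\theta<\frac{\pi}{2}$ to at least $2\sqrt{deg}\,\theta\ge 1>\frac{\pi}{4}$. So the schedule must reach $k\ge\sqrt{deg}$, not merely $O(\sqrt{deg})$. Some run then lands in $[\frac{\pi}{4},\frac{\pi}{2}]$ and fires with probability at least $\frac12$.
\item If $\theta\ge\frac{\pi}{6}$, the run $k=0$ fires with probability $\sin^2\theta\ge\frac14$. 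Your schedule omits $k=0$, so you would have to check by hand that $\max(\sin^2 3\theta,\sin^2 5\theta)\ge\sin^2\frac{\pi}{8}$ on $[\frac{\pi}{6},\frac{\pi}{2}]$. This is true, but nothing you cite gives it.
\end{itemize}

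Finally, tidy the case split. Put $\gamma_j=z-\epsilon$ into the low class; it still has $\mathbb{P}(\tilde x_j>z)\le\eta$ because the comparison is strict. With your strict definition, the configuration $\max_j\gamma_j=z-\epsilon$ lands in your ``either output is acceptable'' case, where \emph{found} actually violates the first implication. Also, an ambiguous child beside a high child does not make \emph{not found} acceptable. Your missed-detection case already covers that configuration, so split on $\max_j\gamma_j$ rather than on the presence of an ambiguous child.
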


\grn{The proof, in Appendix \ref{app:proof.search}, runs amplitude amplification with a fixed doubling schedule and a majority vote, and checks the three configurations of the true values one by one. A child within $\epsilon$ of the pivot may fall either way without issue.}

\begin{algorithm}
    \caption{MinMaxFindingUnitary($U_{d+1}$, d, $\epsilon$, \grn{$\epsilon_2$, $\delta$, $\mathrm{ext} \in \{\max,\min\}$}) }\label{alg:min-max-binary-search}
    \begin{algorithmic}[1]
        \State $steps \gets \lceil \log_2(1/\grn{\epsilon_2})\rceil$

        \State Define $U_{pivot} \ket{v}\ket{lb}\ket{ub}\ket{0}:= \ket{v}\ket{lb}\ket{ub}\ket{(lb+ub) / 2}$
        \State \grn{Define $U_{child} \ket{v}\ket{j}\ket{0}:= \ket{v}\ket{j}\ket{w_j}$, where $w_j$ is the $j$-th child of $v$}
        \State \grn{Define $U_{est}:= U_{d+1}\, U_{child}$, so that $U_{est}\ket{v}\ket{j}\ket{0}\ket{0}$ carries an estimate of the value of $w_j$ in its last register}
        \State \grn{Define $\bar U_{est}$ as $U_{est}$ followed by a reversible median over $r = 8\lceil\log(\grn{C_0}\,deg)\rceil$ independent copies, as in Lemma \ref{lem:median.bridge}, with the median left in place ($\grn{C_0}$ is the constant of Lemma \ref{lem:search.step}).}
        \State Define $U_{step}$ to be the unitary that takes $\ket{v}\ket{lb}\ket{ub}$ and coherently searches \grn{over $j$} whether one of the estimators given by $\grn{\bar U_{est}}$ is \grn{above} the pivot given by $U_{pivot}$ \grn{if $\mathrm{ext} = \max$, and below it if $\mathrm{ext} = \min$, by amplitude amplification over the $deg$ children as in the proof of Lemma \ref{lem:search.step}}. To reduce the failure rate it does this $\grn{O(\log(steps/\delta))}$ times and uses majority voting to determine if it is found\grn{, so that each round fails with probability at most $\delta/steps$}\grn{ (this majority is the one of Lemma \ref{lem:search.step} with $\delta' = \delta/steps$, not an additional vote)}.  \grn{It then updates the bracket. If $\mathrm{ext} = \max$, controlled on found it changes $\ket{lb}$ to $\ket{\max(lb, pivot-\epsilon)}$, and controlled on not found it changes $\ket{ub}$ to $\ket{\min(ub, pivot+\epsilon)}$; if $\mathrm{ext} = \min$ the two are exchanged. By Lemma \ref{lem:search.step} the invariant $lb \leq \mathrm{ext}_j\, R_{d+1}(w_j) \leq ub$ survives whichever update fires and whichever way an ambiguous child is classified.}

        \State Define $U_{min/max} = (U_{step})^{steps}$\grn{, applied with $\ket{lb} = \ket{0}$, $\ket{ub} = \ket{1}$} \Comment{Applies $U_{step}$ steps times starting with the lower bound 0 and upper bound 1.}
        \State Define $U_{d, \epsilon}$ to apply $U_{min/max}$ to $\ket{v}$ followed by $U_{pivot}$ and puts the fourth register of $U_{pivot}$ in the second (output) register given and all the other intermediate results in the third register as garbage.
        \State \Return $U_{d, \epsilon}$
    \end{algorithmic}
\end{algorithm}

\subsubsection{Query complexity of the deterministic step}\label{sec:minmax-algorithm-subsection-query-complexity}

Note that the unitary returned uses $\log_2(1/\grn{\epsilon_2})$ search iterations. Each search iteration has a query complexity of $O(\sqrt{deg}\grn{)} \cdot Q(U_{d+1\grn{,\epsilon/2}})$, but needs to be amplified with majority voting $\grn{O(\log(steps/\delta))}$ times, \grn{and the median-boosted $\bar U_{est}$ answers each query at a further cost of $r = O(\log (deg))$ copies,} so the total query complexity of the unitary is

\begin{equation} \label{eq:det.step.query.complexity}
Q(U_{d, \epsilon}) = O\Big(\sqrt{deg} \cdot \log(1/\grn{\epsilon_2}) \log \big(\tfrac{\log(1/\grn{\epsilon_2})}{\delta}\big) \grn{{}\cdot \log (deg)}\Big)\cdot Q(U_{d+1, \grn{\epsilon/2}})
\end{equation}

\grn{The children are asked for accuracy $\Theta(\epsilon)$, not $\Theta(\epsilon/deg)$. This keeps the deterministic layer's contribution to the exponent of $deg$ at $\frac12$ rather than $\frac32$, and it is possible because Lemma \ref{lem:search.step} brackets the true extremum directly instead of taking an extremum of mean-square estimates (cf.\ Remark \ref{rem:max.lemma.is.tight}).}

\subsubsection{Error budget of the deterministic step}

\begin{lemma}[Algorithm \ref{alg:min-max-binary-search} realises the deterministic step] \label{lem:minmax.satisfies.assumption}
    \grn{Let $type(d-1) = det$ and $G_d = \mathrm{ext} \in \{\max,\min\}$. Let $U_{d+1}$ be a unitary at layer $d+1$ whose outputs $R_{d+1}(v')$, $v' \in V(T)_d$, satisfy $\mathbb{E}[(R_{d+1}(v') - y_{v'})^2] \leq (\epsilon/8)^2$ for numbers $y_{v'} \in [0,1]$, and let $U_{d,\epsilon}$ be $MinMaxFindingUnitary(U_{d+1}, d, \frac{\epsilon}{4}, \epsilon_2 = \frac{\epsilon}{2}, \delta = \frac{\epsilon^2}{2}, \mathrm{ext})$ built on this $U_{d+1}$, with its output clipped to $[0,1]$. Then for every $v \in V(T)_{d-1}$
    \[ \mathbb{E}\big[(R_d(v,\epsilon) - \mathrm{ext}_j\, y_{w_j})^2\big] \leq \tfrac34 \epsilon^2 \leq \epsilon^2, \]
    and $U_{d,\epsilon}$ applies $U_{d+1}$ at most $q_d(\epsilon) = O\big(\sqrt{deg}\,\log(\tfrac1\epsilon)\,\log(\tfrac{\log(1/\epsilon)}{\epsilon})\,\log deg\big)$ times. Hence $MinMaxFindingUnitary$ with these parameters realises $ConstructUnitary\_d$ of Assumption \ref{ass:deterministic.step.unitary} with $\kappa_d = 8$ and this $q_d$.}
\end{lemma}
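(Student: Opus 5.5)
The plan is to split the error of $R_d(v,\epsilon)$ into a good event, on which every bracket step behaves as Lemma \ref{lem:search.step} promises, and a bad event, on which we use only the crude clipping bound $|R_d(v,\epsilon) - \mathrm{ext}_j y_{w_j}| \le 1$. Then we combine the two with the converse half of Lemma \ref{lem:median.bridge}. I carry out the argument for $\mathrm{ext}=\max$; the $\min$ case is symmetric.

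\textbf{Inputs to the search.} The hypothesis gives RMSE $\epsilon/8$ for every child. By Lemma \ref{lem:median.bridge}, the median-boosted read $\bar U_{est}$ with $r = 8\lceil \log(C_0\,deg)\rceil$ copies satisfies $|\tilde x_j - y_{w_j}| \le 2\cdot\epsilon/8 = \epsilon/4$ except with probability $e^{-r/8}\le 1/(C_0\,deg)$. This is exactly the hypothesis of Lemma \ref{lem:search.step}, with accuracy parameter $\epsilon/4$ (the first numerical argument of $MinMaxFindingUnitary$) and with $\gamma_j := y_{w_j}\in[0,1]$.

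\textbf{One round and the whole run.} Each of the $steps = \lceil\log_2(2/\epsilon)\rceil$ rounds calls Lemma \ref{lem:search.step} with $\delta' = \delta/steps$. A union bound then shows that, except with total probability $\delta = \epsilon^2/2$, every round satisfies the two implications. On that event the update $lb\gets\max(lb,z-\epsilon/4)$ on found, and $ub\gets\min(ub,z+\epsilon/4)$ on not found, keeps the invariant $lb\le \max_j y_{w_j}\le ub$. The bracket length obeys $\lambda_{i+1}\le \lambda_i/2+\epsilon/4$ from $\lambda_0=1$. Unrolling gives $\lambda_{steps}\le 2^{-steps}+\epsilon/2 \le \epsilon_2 + \epsilon/2 = \epsilon$. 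The output is the midpoint, so its error is at most $\lambda/2\le\epsilon/2$ on the good event, and clipping does not increase it.

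\textbf{Combining.} By the converse of Lemma \ref{lem:median.bridge}, $\mathbb{E}[(R_d-\mathrm{ext}_j y_{w_j})^2]\le (\epsilon/2)^2 + \delta\cdot 1 = \epsilon^2/4+\epsilon^2/2 = \tfrac34\epsilon^2$.

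\textbf{Query count.} This is \eqref{eq:det.step.query.complexity} with $\epsilon_2=\epsilon/2$ and $\delta=\epsilon^2/2$. There are $steps=O(\log(1/\epsilon))$ rounds. Each costs $O(\sqrt{deg}\log(steps/\delta)) = O(\sqrt{deg}\log(\log(1/\epsilon)/\epsilon))$ reads, and each read costs $r=O(\log deg)$ applications of $U_{d+1}$. The hypothesis of point 3 of Assumption \ref{ass:deterministic.step.unitary} with $\kappa_d=8$ is precisely RMSE $\epsilon/8$ together with the clipping bound $|R_{d+1}-y|\le1$. So the construction realises $ConstructUnitary\_d$ with $\kappa_d = 8$ and this $q_d$.

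\textbf{Main obstacle.} The delicate point is the bracket arithmetic. The per-step slack $\epsilon/4$ has to produce a fixed point $\lambda^\ast = \epsilon/2$, and the final length has to fit within $\epsilon$ so that the midpoint error is $\epsilon/2$. Getting these constants right is what makes the choices $\epsilon/4$ and $\epsilon_2=\epsilon/2$ close exactly. A second subtlety is that the invariant is stated for the true values $y_{w_j}$ rather than for the estimates. This matters because Lemma \ref{lem:search.step} must hold with garbage-carrying boosted reads, and because an ambiguous child near the pivot must not break the invariant; the lemma already covers both.
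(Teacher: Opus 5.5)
Your proposal is correct and follows the paper's proof essentially line by line. The median bridge turns RMSE $\epsilon/8$ into slack $\epsilon/4$ with failure probability $1/(C_0\,deg)$; Lemma \ref{lem:search.step} with $\delta' = \delta/steps$ and a union bound keeps $lb \le \mathrm{ext}_j y_{w_j} \le ub$ outside an event of probability $\delta$; the recurrence $\lambda_{i+1} \le \lambda_i/2 + \epsilon/4$ shrinks the bracket to length at most $\epsilon$; the good/bad split gives $(\epsilon/2)^2 + \delta = \frac34\epsilon^2$; and the query count comes from (\ref{eq:det.step.query.complexity}). The paper also states that each bracket update uses fresh copies of $\bar U_{est}$, which is what lets you apply Lemma \ref{lem:search.step} round by round in your union bound.
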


\begin{proof}
    \grn{By Lemma \ref{lem:median.bridge}, the median-boosted read $\bar U_{est}$ of child $j$ returns a value within $2 \cdot \frac{\epsilon}{8} = \frac{\epsilon}{4}$ of $y_{w_j}$ except with probability $e^{-r/8} \leq \frac{1}{\grn{C_0}\,deg}$. This is the hypothesis of Lemma \ref{lem:search.step} with slack $\frac{\epsilon}{4}$. Apply that lemma with $\delta' = \delta/steps$ at each of the $steps$ bracket updates (each update uses fresh copies of $\bar U_{est}$). The invariant $lb \leq \mathrm{ext}_j y_{w_j} \leq ub$ then holds throughout except on an event of probability at most $\delta$, whatever the comparison bits of the children within $\frac{\epsilon}{4}$ of the pivot. The bracket length satisfies $\lambda_{i+1} = \frac{\lambda_i}{2} + \frac{\epsilon}{4}$ with $\lambda_0 = 1$, so after $steps = \lceil\log_2(1/\epsilon_2)\rceil$ rounds it is at most $\frac{\epsilon}{2} + \epsilon_2 = \epsilon$, and on the good event the returned midpoint is within $\frac{\epsilon}{2}$ of $\mathrm{ext}_j y_{w_j}$. On the failure event both the output and the target lie in $[0,1]$. So the mean square error is at most $(\frac{\epsilon}{2})^2 + \delta \cdot 1 = \frac{\epsilon^2}{4} + \frac{\epsilon^2}{2} = \frac34\epsilon^2$. The cost is (\ref{eq:det.step.query.complexity}) with $\epsilon_2 = \frac{\epsilon}{2}$ and $\delta = \frac{\epsilon^2}{2}$.}
\end{proof}

}

\subsection{Chance layers} \label{sec:chance.layers}

\mrn{
In Lemma \ref{lem:non.det.assumption.is.satisfied}, we showed that there is always an algorithm $ConstructUnitary\_non\_det\_d$ that satisfies the assumptions of Assumption \ref{ass:non-deterministic.step.unitary} for a particular value $s^2 = \sup_{w \in V(T)_d} \left( \mathbb{E}[\Delta(w)^2] \right)$. In fact, the dependence of the constructed algorithm on the linear function $G_d(v, \cdot)$ is fully expressed in terms of the parameters $c_j(v)$. In the case of expectiminimax trees, the linear function $G_d$ is given by
\[ G_d(v, x_1, \dots, x_{deg}) = \frac{1}{deg} \sum_{i=1}^{deg} x_i. \]
In particular,
\begin{equation} \label{eq:Lambda.for.expectiminimax}
    \Lambda = \sup_{v} \sum_{i=1}^{deg} \vert c_j(v) \vert = 1 = deg L.
\end{equation}

We call the algorithm produced by $ConstructUnitary\_non\_det\_d$ in this particular case $MeanFindingUnitary$. 
}

\subsection{Assembling the layers}

\grn{Algorithm \ref{alg:build-unitary-estimator} assembles the two routines into the recursive family of unitaries of Algorithm \ref{alg:generalized.unitary.estimator.constructor}. A deterministic layer at accuracy $\epsilon$ calls $MinMaxFindingUnitary$ with slack $\frac{\epsilon}{4}$, resolution $\epsilon_2 = \frac{\epsilon}{2}$ and failure rate $\delta = \frac{\epsilon^2}{2}$. By Lemma \ref{lem:minmax.satisfies.assumption} its output then has mean square error at most $\frac34\epsilon^2$ and it calls its children at accuracy $\frac{\epsilon}{8}$. A chance layer calls $MeanFindingUnitary(U_{d+1,\epsilon}, s_n, \frac{\epsilon}{3(B_d+1)})$, $B_d+1$ many times. No global failure budget divided over the layers is needed, because what travels up the tree is a root mean square error and not a union-bounded success event.}

{\authorcolour{Maroon}
\begin{algorithm}
\caption{BuildUnitaryEstimator(d, $\epsilon$) \grn{(stochastic $2$-player games)}}\label{alg:build-unitary-estimator}

\begin{algorithmic}[1]
    \If{d = D + 1}
        \State Define $U_{d, \epsilon} \ket{v} \ket{0} \ket{0}:= \ket{v} \ket{val(v)} \ket{0}$
        \State \grn{\Return $U_{d,\epsilon}$}
    \EndIf

    \If{type(\grn{d-1}) = deterministic} \Comment{\grn{$U_{d,\epsilon}$ acts on vertices of depth $d-1$; the step to their children has type $type(d-1)$.}}

        \State $U_{d+1, \epsilon} \gets BuildUnitaryEstimate(d,\epsilon)$
        \State $U_{d,\epsilon} \gets MinmaxFindingUnitary(U_{d+1,\epsilon}, d, \frac{\epsilon}{4}, \epsilon_2 = \frac{\epsilon}{2}, \delta = \frac{\epsilon^2}{2}, \mathrm{ext} = \tau(d-1))$ \Comment{This algorithm is our $ConstructUnitary\_d(U_{d+1},\epsilon)$.}


    \EndIf

    \If{type(\grn{d-1}) = non-deterministic}

        \State $B_{\grn{d}} \gets \grn{\max\Big(0,\; \grn{\Big\lceil} 2 \log_2\Big( \frac{ 2 \sqrt{2}\, deg L \kappa_d }{\epsilon} \Big) \grn{\Big\rceil}\Big)}$
        \For{$0 \leq n \leq B_{\grn{d}}$}
            \State $U_{d+1,n} \gets BuildUnitaryEstimator(d+1, 2^{-\frac{n}{2}})$

            \State $U_{\Delta,n} \gets$ the unitary that applies $U_{\grn{d+1},n}$ and $U_{\grn{d+1},n-1}$ on separate workspaces, unitarily subtracts the result of $U_{\grn{d+1}, n-1}$ from the result of $U_{\grn{d+1},n}$ and puts this in the second register, putting all garbage and intermediate outputs in the last register. \Comment{If $n=0$, we simply set $U_{\Delta,0} \gets U_{\grn{d+1},0}$.}

            \State \grn{$s_n \gets 3 \cdot 2^{-n/2}$}

            \State $A_{d,n} \gets MeanFindingUnitary\big(U_{\Delta,n}, \grn{s_n,}\ \grn{\frac{\epsilon}{3(B_d+1) \kappa_d}}\big)$

        \EndFor
        \State $U_{d, \epsilon}$ is defined as the unitary that takes a $\ket{v}, v \in V(T)_{\grn{d-1}}$ as input and for every $n$, applies $A_{d,n}$ to $\ket{v}$ on a separate workspace\grn{,}  coherently computes their sum in the output register\grn{, and clips it to $[0,1]$}. 

        \State \Return $U_{d, \epsilon}$
        
    \EndIf
    \State \Return $U_{d, \epsilon}$
\end{algorithmic}
\end{algorithm}

}

\subsection{The main bound} \label{sec:analysis.unitary}

\grn{In this section $\epsilon$ is the accuracy requested at the root. A subroutine further down is called at some $\delta' \geq \epsilon_{\min}:= \vio{\epsilon_{\min,D+1} = }\grn{\epsilon/32^{m}}$\vio{, the smallest accuracy floor of Theorem \ref{thm:unitary.general} for the root step $d = 1$ with $\kappa_j = 8$ and $\Lambda_j = 1$}: a deterministic layer at accuracy $\delta$ calls its children at $\frac{\delta}{8}$, a chance layer at $2^{-n/2} \geq 2^{-B_d/2} \grn{\geq \frac{\delta}{4}}$, and these compose over \grn{$m$ layer pairs}. We evaluate every overhead at $\epsilon_{\min}$, which puts $D$ inside the logarithms: $\log(1/\epsilon_{\min}) = \grn{m\log 32} + \log(1/\epsilon) = O(D + \log(1/\epsilon))$. \grn{The factor $K^{D}$ of Theorem \ref{thm:headline} is a different constant: it collects the constants of the cost recursion.} A deterministic layer multiplies the $\frac{1}{\epsilon}$ that passes through it by $8$ and by the constant of the search, and a chance layer by the $16c_Q$ of (\ref{eq:unitary.query.complexity}). Remark \ref{rem:regime} takes this into account.}

\begin{theorem}[Main bound] \label{thm:headline}
    \grn{Let $T$ be a stochastic $2$-player game tree, that is, an expectiminimax tree (Definition \ref{def:emm.tree}) with $\tau$ alternating, of even depth $D = 2m$ with degree at most $deg$, leaf values in an interval of length $N$, and $type(D-1) = det$. Then $U_{1,\epsilon}$ of Algorithm \ref{alg:build-unitary-estimator} returns an estimate of $\gamma_1(root)$ with RMSE at most $\epsilon$ using}
    \[ \grn{Q(U_{1,\epsilon}) \; \leq \; O\!\left( K^{D}\, deg^{\frac{D}{4}} \, \frac{N}{\epsilon} \, D^{4D} \log\left( \frac{N}{\epsilon} \right)^{4D} \right).} \]
    \grn{queries to the leaf oracle, where $K$ is an absolute constant\grn{, at most the larger of the two per-layer constants of Theorem \ref{thm:unitary.general}; it is determined by $c_Q$ and by the constants of Lemma \ref{lem:search.step} and Algorithm \ref{alg:min-max-binary-search}}. With $D = 2m$ this is $\widetilde O(deg^{m/2}\epsilon^{-1})$ against the classical $\widetilde O(deg^{m}\epsilon^{-2})$ of (\ref{eq:classical.baseline}), a quadratic improvement in both parameters, where $\widetilde O$ hides factors polylogarithmic in $deg$, $N$ and $\frac{1}{\epsilon}$ and exponential in $D$.}
\end{theorem}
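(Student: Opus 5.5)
We obtain Theorem \ref{thm:headline} by instantiating Theorem \ref{thm:unitary.general} with the two layer routines of Section \ref{sec:unitary}, and then simplifying the resulting product for $D = 2m$.

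\textbf{Normalisation.} An expectiminimax tree has no linear blowup with $C=1$. After shifting and rescaling all leaf values into $[0,1]$, an RMSE of $\epsilon/N$ in normalised units is an RMSE of $\epsilon$ in original units. So it suffices to run $U_{1,\epsilon/N}$ on the normalised tree; this is the only source of the factor $N/\epsilon$.

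\textbf{Checking the hypotheses of Theorem \ref{thm:unitary.general}.}
\begin{itemize}
\item At chance steps, $G_d$ is the mean. Assumption \ref{ass:non-det.value.passing.is.linear} then holds with $\Lambda_d = deg\,L = 1$, as in (\ref{eq:Lambda.for.expectiminimax}). Assumption \ref{ass:non-deterministic.step.unitary} is supplied by Lemma \ref{lem:non.det.assumption.is.satisfied}; this is the routine $MeanFindingUnitary$.
\item At deterministic steps, Lemma \ref{lem:minmax.satisfies.assumption} gives Assumption \ref{ass:deterministic.step.unitary} with $\kappa_d = 8$ and
\[ q_d(\epsilon) = O\big(\sqrt{deg}\,\log(1/\epsilon)\log(\log(1/\epsilon)/\epsilon)\log deg\big). \]
So $\alpha = \tfrac12$, and the polylog has degree $\beta \le 3$, hence $\max(4,\beta) = 4$.
\item The search and the multilevel estimator clip to $[0,1]$, which gives the crude bound $|R-y|\le 1$ needed in point 3.
\item Coherence of the nesting is exactly Lemma \ref{lem:coherent.composition}: every circuit is measurement-free and independent of the vertex $v$, with vertex-uniform second-moment bounds $s_n$.
\end{itemize}
The RMSE claim then follows directly from Theorem \ref{thm:unitary.general}.

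\textbf{Cost.} Substitute $d=1$, so there are $D-1 \le D$ steps (roughly $D/2$ of each type), with $deg\,L = 1$ and $\kappa = 8$. The degree factors are then $deg^{\alpha\cdot(\text{number of det steps})}$, that is, $\sqrt{deg}$ per deterministic layer, for $m = D/2$ layers. This gives $deg^{m/2} = deg^{D/4}$.

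This bookkeeping is the main obstacle. Formula (\ref{eq:general.theorem}), read literally, carries $deg^{\alpha (D-d)/2}\,deg^{(D-d)/2}L^{(D-d)/2}$. With $deg\,L = 1$ the second and third factors cancel, leaving $deg^{(D-1)/4} \le deg^{D/4}$. To confirm that the exponent really counts deterministic layers, rather than $\alpha (D-d)$, I would unwind the per-layer recursion from the proof sketch directly. A deterministic step gives
\[ M_d = q_d(\epsilon_{\min})\,\kappa_d\, M_{d+1}, \]
and a chance step gives
\[ M_d = 16c_Q\Lambda_d(B_d+1)^2\log(\cdot)\,M_{d+1}. \]
Multiplying the $D$ factors then gives:
\begin{itemize}
\item $\sqrt{deg}$ only from the $m$ deterministic steps;
\item constant factors $8C_{search}$ and $16c_Q$, whose maximum is $K$, over at most $D$ steps, hence $K^D$;
\item logarithmic factors, each bounded by $O(\log(1/\epsilon_{\min}) + \log deg)^{3}$ or $(B_d+1)^2\log(\cdot)$.
\end{itemize}

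\textbf{Logarithmic factors.} Use $\epsilon_{\min} = \epsilon/(N\,32^m)$, so that $\log(1/\epsilon_{\min}) = O(D + \log(N/\epsilon))$, and $B_d = O(\log(1/\epsilon_{\min}))$. Each of the $D$ layers therefore contributes at most
\[ O\big((D+\log(deg N/\epsilon))^4\big) \le O\big(D^4\log(deg N/\epsilon)^4\big), \]
using $D\ge 2$ and $\log(\cdot)\ge 1$ after absorbing constants into $K$. The product over $D$ layers is $D^{4D}\log(deg N/\epsilon)^{4D}$. Here I would keep the $deg$ inside the logarithm, matching the abstract's statement, since the $\log deg$ factor from the median boost must be absorbed somewhere.

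\textbf{Comparison with the classical baseline.} With $D$ constant, the bound reads $\widetilde O(deg^{m/2}\epsilon^{-1})$. Setting this against (\ref{eq:classical.baseline}) gives the stated quadratic improvement in both $deg$ and $\epsilon$.
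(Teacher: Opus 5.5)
Your proposal is correct and takes the paper's route. You instantiate Theorem \ref{thm:unitary.general} with $\kappa_d = 8$, $\alpha = \frac12$, $\beta = 3$ and $deg\,L = \Lambda_d = 1$, rescale $\epsilon$ to $\epsilon/N$, and cite Lemma \ref{lem:coherent.composition} for the nesting. Unwinding the per-layer recursion to count exactly $m$ factors of $\sqrt{deg}$ is a sound safeguard. Read literally at $d=1$, (\ref{eq:general.theorem}) gives $deg^{(D-1)/4}$, but there are $D$ steps (not $D-1$), and $m$ of them are deterministic.

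The one real difference is your logarithm: you obtain $\log(deg\,N/\epsilon)^{4D}$, while the statement has $\log(N/\epsilon)^{4D}$. Here your version is the defensible one. The paper removes $deg$ by substituting $deg\,L = 1$ into the $\log(deg\,L\kappa/\epsilon)$ of (\ref{eq:general.theorem}). That step relies on the appendix absorbing the $\log deg$ factor of $q_d$ (the median boost in Lemma \ref{lem:minmax.satisfies.assumption}) into $B_{d+1}$ ``for sufficiently small $\epsilon$''. With $deg\,L = 1$, however, $B_{d+1}$ does not grow with $deg$. The true per-layer logarithmic factor is at most cubic, so the spare power in the exponent $4D$ can absorb one $\log deg$ per deterministic layer, but only while $\log deg$ is bounded by a fixed power of $D\log(N/\epsilon)$. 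The boosts by themselves contribute a factor $(\log deg)^m$ to the query count. So the literal form cannot hold uniformly as $deg \to \infty$ with $D$, $N$, $\epsilon$ fixed, whereas your bound, which matches the abstract, needs no such proviso.
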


\begin{proof}
    \grn{Algorithm \ref{alg:build-unitary-estimator} is a special case of Algorithm \ref{alg:generalized.unitary.estimator.constructor} with $\kappa_d = 8$ and, by Lemma \ref{lem:minmax.satisfies.assumption},
    \[ q_d(\delta) = O\big(\sqrt{deg}\,\log(\tfrac1\delta)\log(\tfrac{\log(1/\delta)}{\delta})\log deg\big) \]
    at the $m$ deterministic steps, and $deg L = 1$ at the $m$ non-deterministic steps. In particular, we see that $\beta = 3$. Theorem \ref{thm:unitary.general} gives the RMSE bound and
    \begin{equation*}
        \begin{split}
            Q(U_{1,\epsilon}) & \;\leq\; O\left( \frac{1}{\epsilon} deg^{\alpha m} K'^c \kappa^c D^{4D} \log\left( \frac{\kappa}{\epsilon} \right)^{4D} \right)\\
        \end{split}
    \end{equation*}
    By Lemma \ref{lem:minmax.satisfies.assumption}, we have $\alpha = \frac{1}{2}$. We thus obtain
    \[ Q(U_{1, \epsilon}) \leq O\left( \frac{1}{\epsilon} deg^{\frac{D-d}{4}} K'^{\frac{D-d}{2}} \kappa^{\frac{D-d}{2}} D^{4D} \log\left( \frac{8}{\epsilon} \right)^{4D} \right). \]}
    We can pack $\kappa$ into the constant $K := \kappa K'$. After the rescaling of Definition \ref{def:no.blowup}, which replaces $\epsilon$ by $\epsilon/N$, we have $\log(1/\epsilon) = O(\log(N/\epsilon))$. This produces the bound on $Q(U_{1,\epsilon})$ stated in the Theorem. Lemma \ref{lem:coherent.composition} justifies composing the layers inside the mean estimator.
\end{proof}

\begin{corollary}[Constant depth] \label{cor:constant.depth}
    \grn{If we treat the depth $D$ as a constant, as the analyses of nested expectation estimation in \cite{sunOptimalQuantumSpeedups2026,blanchetNonlinearQuantumMonteCarlo2025} do, Theorem \ref{thm:headline} reads
    \[ Q(U_{1,\epsilon}) \;=\; \widetilde O\!\left( deg^{m/2}\, \epsilon^{-1} \right) , \]
    where $\widetilde O$ hides factors polylogarithmic in $deg$, $N$ and $\frac1\epsilon$. By Proposition \ref{prop:composed.lower.bound} this is optimal up to a factor $\sqrt{deg}$ and the polylogarithm whenever $\frac{1}{4\,deg} \leq \epsilon \leq \frac18$ \grn{and $m \geq 2$}. Without that convention, the bound improves on the classical baseline (\ref{eq:classical.baseline}) in $\epsilon$ for every tree, and in $deg$ for trees with $deg \gtrsim K^4\Lambda^{12}$ (Remark \ref{rem:regime}).}
\end{corollary}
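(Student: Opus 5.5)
The corollary has two parts. The first is the upper bound, obtained by specialising Theorem~\ref{thm:headline} to constant depth. The second is the optimality claim, obtained by comparing with Proposition~\ref{prop:composed.lower.bound}. The remark about non-constant depth is deferred to Remark~\ref{rem:regime}, which I will only cite.

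\textbf{Upper bound.} Fix $D = 2m$ as a constant. In the bound of Theorem~\ref{thm:headline} the factors $K^{D}$ and $D^{4D}$ are then absolute constants. The remaining factor $\log(N/\epsilon)^{4D}$, or $\log(deg\,N/\epsilon)^{4D}$ in the form given in the abstract, is a polylogarithm in $deg$, $N$ and $1/\epsilon$ of fixed degree $4D$. Since $deg^{D/4} = deg^{m/2}$, the bound reads
\[ Q(U_{1,\epsilon}) = \widetilde O\big(deg^{m/2}\, N\,\epsilon^{-1}\big). \]
Under the normalisation $N = 1$ this is exactly the displayed statement. If $N$ is kept, it appears only as the linear factor that comes from the rescaling of Definition~\ref{def:no.blowup}.

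\textbf{Optimality.} Here I would invoke Proposition~\ref{prop:composed.lower.bound}. I expect it to give, for $m \ge 2$ and $\frac{1}{4deg} \le \epsilon \le \frac18$, a lower bound of the form $\Omega\big(deg^{(m-1)/2}\,\epsilon^{-1}\big)$, possibly up to constants depending on $m$. The natural source for such a bound is a composition of an $\Omega(\sqrt{deg})$ search bound per deterministic layer with an $\Omega(1/\epsilon)$ mean-estimation bound at one chance layer. The ratio of the upper bound to this lower bound is $\sqrt{deg}$ times the hidden polylogarithm, which is the claimed optimality up to a factor $\sqrt{deg}$.

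The main obstacle is matching the exact form of the lower bound in Proposition~\ref{prop:composed.lower.bound}: which power of $deg$ it gives, and why it requires $m \ge 2$ and $\epsilon \le \frac18$. If the lower bound instead has the form $\Omega(deg^{m/2-1/2}\epsilon^{-1})$ only under additional constraints, I would restrict the corollary's claim to exactly the parameter range stated in the proposition. I would also check that constant depth makes the $D$-dependent constants harmless on both sides of the comparison.
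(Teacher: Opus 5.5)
Your proposal is correct and matches the paper, which gives no separate proof: it reads the upper bound off Theorem~\ref{thm:headline} at constant $D$, with $N$ normalised as you note (the lower-bound instance also has leaves in $\{0,1\}$), and compares it with Proposition~\ref{prop:composed.lower.bound}. The obstacle you anticipate goes away once the proposition is read: for $\frac{1}{4\,deg}\le\epsilon\le\frac18$ one has $m_2(\epsilon)=\lceil\log_{deg}(1/(4\epsilon))\rceil=1$, so its hypothesis $m_2(\epsilon)\le m-1$ is exactly $m\ge 2$. Its ``in particular'' clause then gives $\Omega(deg^{(m-1)/2}\epsilon^{-1})$, which leaves precisely the ratio $\sqrt{deg}$ times the polylogarithm.
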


\begin{corollary}[Unconditional bound with evaluated deterministic layers] \label{cor:unconditional}
    \grn{Suppose the deterministic layers of Algorithm \ref{alg:build-unitary-estimator} evaluate all $deg$ children, each boosted by a median over $r = O(\log(deg/\epsilon))$ copies as in Lemma \ref{lem:det.step.no.deg}, and take the extremum with reversible arithmetic, in place of Algorithm \ref{alg:min-max-binary-search}. Then}
    \[ \grn{Q(U_{1,\epsilon}) \; = \; O\!\left( K^{D}\, deg^{\frac{D}{2}} \, \frac{N}{\epsilon} \, D^{4D} \log\left(\frac{N}{\epsilon}\right)^{4D} \right) \; = \; \widetilde O\!\left( deg^{m} \frac{1}{\epsilon} D^{4D} \right) \qquad \text{for } D = 2m.} \]
    \grn{Against the classical $\widetilde O(deg^{m}\epsilon^{-2})$ of (\ref{eq:classical.baseline}) this is a quadratic improvement in $\epsilon$ and none in $deg$.}
\end{corollary}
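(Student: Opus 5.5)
The plan is to obtain Corollary \ref{cor:unconditional} from Theorem \ref{thm:unitary.general}, exactly as Theorem \ref{thm:headline} was obtained, but with instance (b) of Example \ref{rem:det.instances} at the deterministic steps in place of the search instance (c). First we check that the median-boosted full evaluation realises $ConstructUnitary\_d$ of Assumption \ref{ass:deterministic.step.unitary}. Take children satisfying $\mathbb{E}[(R_{d+1}(v')-y_{v'})^2] \leq (\epsilon/\kappa_d)^2$ and $|R_{d+1}(v')-y_{v'}|\leq 1$. By the normalisation of Section \ref{sec:generalized.algorithm} we may take $N=1$ in Lemma \ref{lem:det.step.no.deg}. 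With $\epsilon_0=\epsilon/\sqrt5$ that lemma gives mean square error at most $\epsilon^2$ for the extremum of the boosted children. Its proof uses only the bound against the numbers $y_{v'}$ and never the identity $y_{v'}=\gamma_{d+1}(v')$, so it applies with general targets. Hence $\kappa_d=\sqrt5\leq 8$. The number of calls is $q_d(\epsilon)=deg\cdot r$ with $r=8\lceil\log(5\,deg/\epsilon^2)\rceil$, so $\alpha=1$ and $\beta=1$. Clipping to $[0,1]$ only helps. Each copy runs on its own workspace with a reversible median and a reversible extremum, so the circuit has no measurement and is independent of $v$, and Lemma \ref{lem:coherent.composition} still justifies nesting it inside the chance-layer estimator.

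The chance layers are untouched. Lemma \ref{lem:non.det.assumption.is.satisfied} applies with $\Lambda_d = deg\,L = 1$, exactly as in Theorem \ref{thm:headline}. We then substitute into \eqref{eq:general.theorem} with $d=1$, $\alpha=1$, $\beta=1\leq 4$ (so the exponent $\max(4,\beta)=4$), $\kappa=\sqrt5$, and $deg\,L=1$ at the non-deterministic steps.

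The main obstacle, and the step to do carefully, is the exponent bookkeeping. As printed, \eqref{eq:general.theorem} carries both $deg^{\alpha(D-d)/2}$ and $deg^{(D-d)/2}L^{(D-d)/2}$. Read literally, this would give $deg^{D}$ for $\alpha=1$, whereas Theorem \ref{thm:headline} obtained $deg^{D/4}$ from $\alpha=\tfrac12$. I would therefore not quote the display but unwind the recursion from the proof sketch directly. A deterministic step gives $M_d=q_d(\epsilon_{\min})\,\kappa_d\,M_{d+1}$, which contributes $deg\cdot O(\log(deg/\epsilon_{\min}))\cdot\sqrt5$. A chance step gives $M_d=16c_Q\Lambda_d(B_d+1)^2\log(\cdot)\,M_{d+1}$, which contributes no power of $deg$ when $\Lambda_d=1$. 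Over the $m$ deterministic steps the product is $deg^{m}=deg^{D/2}$. The logarithmic factors are each $O(D+\log(deg/\epsilon))$ per layer, at most four per layer pair, and they collect into $D^{4D}\log(N/\epsilon)^{4D}$ after absorbing $\log deg$ and the accuracy floor $\epsilon_{\min}=\epsilon/32^m$ (or the analogous floor with $\kappa=\sqrt5$). The per-layer constants collect into $K^D$.

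Finally, we undo the rescaling of Definition \ref{def:no.blowup}, with $C=1$, which replaces $\epsilon$ by $\epsilon/N$ and gives the factor $N/\epsilon$. With $D=2m$ this is $\widetilde O(deg^m\epsilon^{-1}D^{4D})$. Comparing with \eqref{eq:classical.baseline} then gives the quadratic gain in $\epsilon$ and no gain in $deg$.
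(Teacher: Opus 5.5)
Your proposal is correct and follows the paper's argument: the paper's proof is exactly the substitution of instance (b) of Example~\ref{rem:det.instances} ($\kappa_d=\sqrt5$, $q_d=deg\cdot r$, hence $\alpha=1$) into Theorem~\ref{thm:unitary.general}, with the chance layers as in Theorem~\ref{thm:headline}. Your check that Lemma~\ref{lem:det.step.no.deg} holds for general targets $y_{v'}$ makes explicit what the paper leaves implicit.

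The obstacle you raise is not real, and the $\log deg$ factor is not actually absorbed. At the chance layers $deg^{(D-d)/2}L^{(D-d)/2}=(deg\,L)^{(D-d)/2}=1$, so \eqref{eq:general.theorem} with $\alpha=1$ gives $deg^{(D-1)/2}$, not $deg^{D}$; your unwinding to $deg^{m}$ is the more accurate count. Separately, as in the paper, the $(\log deg)^{m}$ from the median boosts cannot honestly be absorbed into $\log(N/\epsilon)^{4D}$, so the first form should read $\log(deg\,N/\epsilon)^{4D}$ as in the abstract, while the $\widetilde O$ form is unaffected.
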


\begin{proof}
    \grn{This is instance (b) of \vio{Example} \ref{rem:det.instances}, where $\alpha = 1$. Plugging this into the formula of Theorem \ref{thm:unitary.general} and using the same values as in Theorem \ref{thm:headline} otherwise, we get the stated bound.}
\end{proof}

\grn{The corollary separates the two speedups. The improvement in $\epsilon$ rests only on the telescoping of the level differences and does not care how the deterministic layers are done. The improvement in $deg$ is what Algorithm \ref{alg:min-max-binary-search} buys, and it is the part that needs Lemma \ref{lem:search.step}.}

\section{Optimality and the regime of advantage} \label{sec:optimality}

\grn{Prior to this work no lower bound was known for expectiminimax trees in which both kinds of layer are active as far as we are aware. What we have are two degenerate families. Each is itself an expectiminimax tree, so a lower bound on either transfers to the general problem. Collapse the chance layers, that is, make all children of every chance node equal so that each expectation is the identity. What remains is an $m$-layer minimax tree on $deg^m$ effective leaves, and one query to it answers every leaf query in the corresponding block. Its boolean special case is an AND-OR formula and needs $\Omega(\sqrt{deg^m}) = \Omega(deg^{m/2})$ queries \cite{barnumLowerBoundQuantum2004}. The reduction from Min-Max to And-Or trees is the one \textcite{cleveQuantumAlgorithmsEvaluating2019} use for the same purpose, and \textcite{ambainisAnyAndOrFormula2010} give the matching upper bound. This is the $deg^{m/2}$ of \cite{cleveQuantumAlgorithmsEvaluating2019}, and Theorem \ref{thm:headline} matches it. Collapse the deterministic layers instead, making all children of every max and min node equal. What remains is the nested expectation over root-to-leaf paths, that is, mean estimation of a bounded random variable, which needs $\widetilde\Omega(\epsilon^{-1})$ \cite{nayakQuantumQueryComplexity1999,hamoudiQuantumAlgorithmsMonte}. Together these give}
\[ \grn{Q_\epsilon(T) \; = \; \widetilde\Omega\!\left( \max\left( deg^{m/2}, \; \epsilon^{-1} \right) \right),} \]
\grn{the maximum of the two and not their product, while the upper bound above is the product. A matching $\Omega(deg^{m/2}\epsilon^{-1})$ needs an instance in which both layer types are active, which neither degenerate family provides. Proposition \ref{prop:composed.lower.bound} below gives such an instance through the composition theorem for the general adversary bound. It proves the product up to a factor $deg^{m_2(\epsilon)/2}$, where $m_2(\epsilon)$ is the number of layer pairs a chance block needs to accrue an error of at least $\epsilon$; this is a single factor $\sqrt{deg}$ whenever $\epsilon \geq \frac{1}{4\,deg}$. So the algorithm is optimal in each parameter on its own with the other held fixed, and jointly up to $\sqrt{deg}$ and the polylogarithm in that regime.}

\grn{The same two collapses give the classical lower bounds recorded with (\ref{eq:classical.baseline}): $\Omega(\rho_{deg}^{\,m})$ with $\rho_{deg} \geq deg/2$ from \cite[Thm.~5.5]{saksProbabilisticBooleanDecision1986} and \cite{santhaMonteCarloBoolean1995}, and $\Omega(\epsilon^{-2})$ from classical mean estimation. So the quantum bound beats every classical algorithm in each parameter.}

\grn{For a quantum lower bound with both kinds of layer active we compose the two hard instances. \mrn{To execute this, we use} the general adversary bound $\mathrm{Adv}^{\pm}$ of \cite{hoyerNegativeWeightsMake2007}. It characterises bounded-error quantum query complexity for every function with finite domain and range, partial functions included \cite[Thm.~1.1]{leeQuantumQueryComplexityState2011}, and it is multiplicative under composition when the outer function is total and the inner function has boolean output \cite[Lemma~5.2]{leeQuantumQueryComplexityState2011}, where the inner function may be partial. In the boolean case this is \cite[Thm.~13]{hoyerNegativeWeightsMake2007} and \cite[Thm.~2.7]{reichardtSpanPrograms2009}.}

\begin{proposition}[A lower bound with both layer types active] \label{prop:composed.lower.bound}
    \grn{Let $T$ be the alternating tree of depth $D = 2m$, degree $deg \geq 2$ and $type(D-1) = det$, with leaf values in $\{0,1\}$, and for $0 < \epsilon \leq \frac18$ put $m_2(\epsilon) := \lceil \log_{deg}(1/(4\epsilon)) \rceil$, the number of layer pairs a chance block needs to carry a gap of $2\epsilon$ in its mean. If $m_2(\epsilon) \leq m-1$, then every quantum algorithm that estimates the value of $T$ with root mean square error at most $\epsilon$ makes
    \[ \Omega\!\left( deg^{\frac{m - m_2(\epsilon)}{2}}\, \epsilon^{-1} \right) \]
    queries to the leaf oracle. In particular the bound is $\Omega(deg^{\frac{m-1}{2}}\epsilon^{-1})$ for $\frac{1}{4\,deg} \leq \epsilon \leq \frac18$, and it is at least $\Omega(deg^{\frac{m-1}{2}}\epsilon^{-1/2})$ whenever $\frac{1}{4\,deg^{\,m-1}} \leq \epsilon \leq \frac18$.}
\end{proposition}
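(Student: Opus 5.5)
We build a composed hard instance and apply the composition theorem for the general adversary bound \cite[Lemma~5.2]{leeQuantumQueryComplexityState2011}. The tree has $m$ layer pairs, and we split them into an outer block of the top $m - m_2$ pairs and an inner block of the bottom $m_2$ pairs, with $m_2 = m_2(\epsilon)$. The inner block will encode a partial boolean promise problem whose answer is carried by a mean gap of about $2\epsilon$. The outer block is used as a boolean AND-OR formula on $deg^{m-m_2}$ inputs.

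\emph{Inner block.} Within the bottom $m_2$ pairs we collapse the deterministic layers: all children of a max/min node get identical subtrees, so that node just copies one value. What remains is a nested uniform average over $deg^{m_2}$ effective leaves in $\{0,1\}$, i.e. the fraction $p$ of ones. Define the partial function $g$ on these inputs: $g=1$ if $p \geq \frac12 + 2\epsilon$ and $g=0$ if $p \leq \frac12 - 2\epsilon$. Since $deg^{m_2} \geq 1/(4\epsilon)$, the granularity $deg^{-m_2} \leq 4\epsilon$ is fine enough for both promise sets to be nonempty. It is exactly this requirement that fixes $m_2(\epsilon)$. By the standard quantum counting / approximate-counting lower bound \cite{nayakQuantumQueryComplexity1999}, $\mathrm{Adv}^{\pm}(g) = \Omega(\epsilon^{-1})$ in the regime where $deg^{m_2}$ is of order $\epsilon^{-1}$ or larger. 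The adversary bound characterises bounded-error complexity, so we may cite the query lower bound directly.

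\emph{Outer block.} In the top $m - m_2$ pairs we collapse the chance layers instead, making all children of each chance node equal. This leaves a min/max tree of depth $m-m_2$ with $deg^{m-m_2}$ effective inputs. Restricted to the inputs $\{\frac12 - 2\epsilon, \frac12 + 2\epsilon\}$, it is monotone-isomorphic to a total AND-OR formula $f$ with $\mathrm{Adv}^{\pm}(f) = \Omega(deg^{(m-m_2)/2})$ \cite{barnumLowerBoundQuantum2004}. The hypothesis $m_2 \leq m-1$ guarantees that at least one such pair exists. The outer function is total and the inner one boolean and partial, so the composition lemma gives $\mathrm{Adv}^{\pm}(f \circ g) = \Omega(deg^{(m-m_2)/2}\epsilon^{-1})$.

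\emph{Reduction.} The outer min/max passes the two possible inner values through unchanged. Hence on the composed instance the root value is either $\geq \frac12 + 2\epsilon$ or $\leq \frac12 - 2\epsilon$, according to $f\circ g$. Take an estimator with RMSE $\epsilon$. By Markov or Chebyshev it lands within $2\epsilon$ of the truth with probability at least $3/4$, so thresholding at $\frac12$ computes $f\circ g$ with bounded error. Each query to the collapsed instance is simulated by one leaf-oracle query, since the collapses only duplicate leaves. This gives the stated bound. The specialisations then follow by arithmetic: $\epsilon \geq 1/(4deg)$ gives $m_2 \leq 1$. For $\epsilon \geq 1/(4deg^{m-1})$ we have $m_2 \leq m-1$, and $deg^{-m_2/2} \gtrsim \epsilon^{1/2}$, which yields the exponent $\epsilon^{-1/2}$.

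\emph{Main obstacle.} The hardest step is the inner adversary bound when $deg^{m_2}$ is only about $1/(4\epsilon)$. Counting ones to gap $\epsilon$ on $n \approx 1/\epsilon$ bits is about distinguishing Hamming weights that differ by about $\epsilon n = O(1)$ around $n/2$. That costs $\Omega(\sqrt{n \cdot n/2}) = \Omega(n) = \Omega(\epsilon^{-1})$ via the Nayak--Wu bound, which is what we need. I will check that the constants line up, adjusting the gap to $2\epsilon$ and the threshold through $m_2$. A second point to verify is that the composition lemma allows the partial inner function with this finite alphabet.
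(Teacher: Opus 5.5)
Your proposal is correct and follows the paper's proof essentially step for step. It uses the same split into $m-m_2$ outer and $m_2$ inner layer pairs with the opposite layer type collapsed in each, the same partial threshold function $g$ bounded via Nayak--Wu, the same adversary composition lemma with a total outer AND--OR formula, and the same Chebyshev thresholding reduction. One small slip is in the last specialisation: from $deg^{m_2} \le deg/(4\epsilon)$ the correct inequality is $deg^{-m_2/2} \geq (4\epsilon/deg)^{1/2}$, not $deg^{-m_2/2} \gtrsim \epsilon^{1/2}$. The extra factor $deg^{-1/2}$ is exactly what turns $deg^{m/2}$ into the stated $deg^{(m-1)/2}$.
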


\grn{The proof is in Appendix \ref{app:proof.lower}.}

\grn{Together with the two degenerate families this gives
\[ Q_\epsilon(T) \;=\; \Omega\!\left( \max\!\left( deg^{m/2},\; \epsilon^{-1},\; deg^{\frac{m - m_2(\epsilon)}{2}}\epsilon^{-1} \right) \right) . \]
The upper bound of Theorem \ref{thm:headline} exceeds the third term by $deg^{m_2(\epsilon)/2}$, the square root of the number of layer pairs a gap block occupies. For $\epsilon \geq \frac{1}{4\,deg}$, that is, whenever the requested accuracy is no finer than one reciprocal branching factor, this is a single factor $\sqrt{deg}$ over the whole depth. In that regime, which is also the regime of Remark \ref{rem:regime}, the product $deg^{m/2}\epsilon^{-1}$ is the true complexity up to $\sqrt{deg}$ and the polylogarithm. The deterministic and the chance costs compound, and this is a property of the problem, not of the algorithm. For finer accuracies the gap grows by $\sqrt{deg}$ for each further factor $deg$ in $\epsilon^{-1}$ and never exceeds $\sqrt{deg/(4\epsilon)}$. The source of the loss is cleear: a chance node with $deg$ children can carry a gap of at most $\frac{1}{2\,deg}$ in its mean, so a finer gap has to be spread over $m_2$ chance layers. In an alternating tree, these arrive interleaved with $m_2$ deterministic layers that the hard instance has to switch off, and switching them off costs $deg^{m_2/2}$ of the outer bound. This is a limitation of the hard instance, not of alternation as a model. An instance in which the deterministic layers inside a block stay active, or an algorithm that spends precision only where the bracket is tight, would settle which side the last factor belongs to. The telescoping in the proof of Theorem \ref{thm:unitary.general} is a case in this same setting where per-layer costs that appear to multiply do not.}

\begin{remark} \label{rem:regime}
    \grn{The per-layer overheads are raised to the power $D$, so the phrase ``logarithmic factors'' carries an exponent that grows with the depth, and we have to state the regime. With $M = D + \log(deg\,N/\epsilon)$ as in Theorem \ref{thm:headline}, the bound beats the classical baseline (\ref{eq:classical.baseline}) in $deg$ only when $K^{4}M^{12} \lesssim deg$, and beats the trivial algorithm that reads all $deg^{D}$ leaves when $(KM^{3})^{4/3} \lesssim deg$. Since $M \geq D$, both are statements about trees that are very wide compared to their depth. \grn{Below the first threshold, evaluating all children as in Corollary \ref{cor:unconditional} gives the better bound: it matches the classical exponent in $deg$ and keeps the speedup in $\epsilon$.} The improvement in $\epsilon$, from $\epsilon^{-2}$ to $\epsilon^{-1}$, holds in every regime. The classical multilevel scheme behind (\ref{eq:classical.baseline}) carries its own constant per layer, roughly the square of ours because an accuracy split costs $\epsilon^{-2}$ there, so the constants do not reverse the comparison with the baseline. They do place both thresholds at very large $deg$, and Theorem \ref{thm:headline} is an asymptotic statement. \grn{As the proofs stand, the deterministic-layer constant is of order $10^4$ (the $8$ of $\kappa_d$, the $18$ of the Hoeffding majority, the four sets and the doubling schedule of Appendix \ref{app:proof.search}, and the $8$ of the median boost) and the chance-layer constant is $16c_Q$, with $c_Q$ the constant of \cite[Cor.~3.2]{sunOptimalQuantumSpeedups2026}. None of these is optimised; a fixed-point search \cite{yoderFixedPointQuantumSearch2014} in place of the doubling schedule would remove most of the first.} The classical lower bound recorded with (\ref{eq:classical.baseline}) holds for all $deg$ and $\epsilon$, so the separation in the exponents does not depend on the constants.}
\end{remark}

\section{Open problems and outlook} \label{sec:ideas}

\grn{To conclude, we list four open problems, although \mrn{there likely are} be more.}

\grn{\emph{Adaptive subroutines.} Lemma \ref{lem:coherent.composition} justifies composing the layers inside the mean estimator, but its condition (ii), one circuit with all parameters independent of the vertex, rules out every adaptive or variable-time subroutine. \mrn{This restriction} matters since  D\"urr--H\o yer \cite{durrQuantumAlgorithmFinding1999} is not applicable; although it beats Algorithm \ref{alg:min-max-binary-search} by the logarithmic factors in $q_d$. Whether a variable-time amplitude estimation in the style of \cite{ambainisQuantumAlgorithmTree2017} can be made to compose here, and so recover those factors, is open.}

\grn{\emph{A matching lower bound.} Section \ref{sec:optimality} gives $\Omega(\max(deg^{m/2}, \epsilon^{-1}, deg^{(m-m_2(\epsilon))/2}\epsilon^{-1}))$, the last term from a composed instance in which both kinds of layer are active (Proposition \ref{prop:composed.lower.bound}), while the upper bound of Theorem \ref{thm:headline} is the product $\widetilde O(deg^{m/2}\epsilon^{-1})$. For $\epsilon \geq \frac{1}{4\,deg}$ the two differ by a single factor $\sqrt{deg}$ and the polylogarithm. For finer accuracies the remaining factor $deg^{m_2(\epsilon)/2}$ is the price the composed instance pays for the deterministic layers it has to switch off inside each gap block. Closing the gap needs either a hard instance in which those layers stay active, or an algorithm that spends precision only where the bracket is tight.}

\grn{\emph{The classical product.} Section \ref{sec:complexity.measure} records the classical lower bound $\Omega(\max(\rho_{deg}^{\,m}, \epsilon^{-2}))$, which makes the baseline (\ref{eq:classical.baseline}) tight in each parameter on its own. Whether the classical product $\Omega(deg^m\epsilon^{-2})$ holds is open, as is its quantum analogue. A pruned classical algorithm for approximately evaluated expectiminimax trees that beats the baseline would be of interest in its own right.}

\emph{\grn{Bounded quantum memory.}} It might be worth proving that a speedup survives in the setting where the quantum computer has a size that is a constant fraction of the problem size, as in \cite{ambainisQuantumAlgorithmTree2017}, both for minimax and for expectiminimax. \grn{This is the most practical of the four. The algorithm as stated holds the whole recursion coherently, and a version that keeps only a constant fraction of the tree in superposition would be far closer to what a device can do.}

Showing that expectiminimax-like trees can be solved coherently may lead to other algorithms. The algorithm is not practical because of its prefactors. However, if it is combined with an approximate algorithm that uses heuristic values after some depth, the speedup may become useful. For realistic games, the prefactor is unlikely to give a speedup. But because the method is general, it may help for more complex stochastic games. Now that we know that the asymptotic speedup can hold, a different approach could reduce the prefactor or polylog factors. 

\printbibliography

\appendix

\section{Proof of Theorem \ref{thm:unitary.general}} \label{sec:proof.2.of.main.theorem}

\begin{proof}

Let $m$ denote the number of deterministic steps from $d$ to $D$ and $c$ the number of non-deterministic steps from $d$ to $D$. To show that Algorithm \ref{alg:generalized.unitary.estimator.constructor} has the right bounds on root mean square error and complexity, we apply induction. We start with the root mean square error. For $d = D+1$, the output is exactly correct, so the error is zero. For the induction step, we have to distinguish between the deterministic and the non-deterministic step.

Let $R_{d}(v, \epsilon)$ be the random variable obtained from measuring the output register of $U_{d, \epsilon} \ket{v} \ket{0} \ket{0}$, where $v \in V(T)_{d-1}$. As we will see, the proper induction step will go through two recursion steps, one deterministic and one non-deterministic one. Suppose by induction that, for every $\delta > 0$,
\[ \mathbb{E}[ (\gamma_{d+2}(w) - R_{d+2}(w,\delta))^2 ] < \delta^2. \]
We will do the case where $type(d-1) = det$. If $type(d-1) = non-det$, we can do one non-deterministic recursion step following the calculations below and then apply the double recursion we are about to do. We may thus assume without loss of generality that $type(d-1) = det$.\\

For any $v' \in V(T)_{d}$, let $A_{d+1,n}(v')$ denote the random variable obtained from measuring the output register of $A_{d+1,n} \ket{v'}\ket{0}\ket{0}$, where $A_{d+1,n}$ are the unitaries that are computed during the first recursion when applying $BuildUnitaryEstimator(d,\epsilon)$. Note that $R_{d+1}(v',\frac{\epsilon}{\kappa_d}) = \sum_{n=0}^{B_{d+1}} A_{d+1,n}(v')$ by definition of $U_{d,\epsilon}$. Similarly, let $\Delta_{d+2}(w,n)$ denote the random variable obtained from $U_{\Delta,n} \ket{w} \ket{0} \ket{0}$. We need to show that
\[ \mathbb{E}[(R_d(v,\epsilon) - \gamma_d(v))^2] \leq \epsilon^2. \]
By Assumption \ref{ass:deterministic.step.unitary} Point 3, this follows, if
\[ \mathbb{E}[(R_{d+1}(v',\frac{\epsilon}{\kappa_d}) - \gamma_{d+1}(v'))^2] \leq \frac{\epsilon^2}{\kappa_d^2} \qquad \text{and} \qquad \vert R_{d+1}(v,\frac{\epsilon}{\kappa_d}) - \gamma_{d+1}(v') \vert \leq 1 \] 
for all $v' \in children(v)$. (Note that the unitary $U_{d+1}$ used to produce $U_{d,\epsilon}$ works with error bound $\frac{\epsilon}{\kappa_d}$, hence the $\frac{\epsilon}{\kappa_d}$-error bound in $R_{d+1}(v', \frac{\epsilon}{\kappa_d})$.) The second of these two inequalities is always satisfied because of our chosen normalization and because we clip the $R_{d+1}(v',\epsilon)$ into $[0,1]$. We are thus left to estimate the left-hand-side of the first expression. For any $v' \in children(v)$, we have that

We now estimate
\begin{equation} \label{eq:unitary.RMSE.bound}
    \begin{split}
        & \leq \mathbb{E} \left[ \left( R_{d+1}(v',\frac{\epsilon}{\kappa_d}) - \gamma_{d+1}(v') \right)^2 \right]\\
        & \leq 2 \mathbb{E}\left[ \left( R_{d+1}(v',\frac{\epsilon}{\kappa_d}) - \sum_{n=0}^{B_{d+1}} G_{d+1}(v',\Delta_{d+2}(w)) \right)^2 \right]\\
        & \qquad + 2 \mathbb{E}\left[ \left( \sum_{n=0}^{B_{d+1}} G_{d+1}(v',\Delta_{d+2}(w)) - \gamma_{d+1}(v') \right)^2 \right]\\
        & = 2 \mathbb{E}\left[ \left( \sum_{n=0}^{B_{d+1}} (A_{d+1,n}(v') - G_{d+1}(v',\Delta_{d+2}(w))) \right)^2 \right]\\
        & \qquad + 2 \mathbb{E}\left[ \left( \sum_{n=0}^{B_{d+1}} G_{d+1}(v',\Delta_{d+2}(w)) - G_{d+1}(v', \gamma_{d+2}(w)) \right)^2 \right]\\
    \end{split}
\end{equation}
where we inserted a term, applied Lemma \ref{lem:CS.application}, and used the definitions of $A_{d+1,n}(v')$ and $\gamma_{d+1}(v')$. We now bound the first term in this expression by using Lemma \ref{lem:CS.application} and the RMSE bound of $ConstructUnitary\_non\_det\_d+1(U_{\Delta,n}, s_n, \frac{\epsilon}{3(B_d + 1) \kappa_d})$:
\begin{equation} \label{eq:unitary.first.summand}
    \begin{split}
        \mathbb{E} & \left[ \left( \sum_{n=0}^{B_{d+1}} (A_{d+1,n}(v') - G_{d+1}(v',\Delta_{d+2}(w))) \right)^2 \right]\\
        & \leq (B_{d+1} + 1) \sum_{n=0}^{B_{d+1} + 1} \mathbb{E} \left[ \left( A_{d+1,n}(v') - G_{d+1}(v',\Delta_{d+2}(w)) \right)^2 \right]\\
        & \leq (B_{d+1} + 1)^2 \frac{\epsilon^2}{9 deg^2 L^2 (B_{d+1}+1)^2 \kappa_{d+1}^2}\\
        & \leq \frac{\epsilon^2}{9 deg^2 L^2 \kappa_{d+1}^2}.
    \end{split}
\end{equation}

To bound the second term, we first note that $G_{d+1}(v', \gamma_{d+2}(w))$ is a constant if $w$ is fixed. Furthermore, we highlight that the definiton of $U_{\Delta}$ gives us a telescope sum, where
\[ \sum_{n=0}^{B_{d+1}} \Delta_{d+2}(w,n) = R_{d+2}\left(w, 2^{- \frac{B_{d+1}}{2}} \right). \]
By linearity of $G_{d+1}$, this implies that
\begin{equation*}
    \begin{split}
        \sum_{n=0}^{B_{d+1}} G_{d+1}(v', \Delta_{d+1}(w,n)) & = G_{d+1}\left(v', \sum_{n=0}^{B_{d+1}} \Delta_{d+1}(w,n)\right)\\
        & = G_{d+1}\left(v', R_{d+2}\left(w, 2^{- \frac{B_{d+1}}{2}} \right) \right)
    \end{split}
\end{equation*}

Recalling that $B_{d+1} = 2 \log_2\left( \frac{2 \sqrt{2} deg L \kappa_{d+1}}{\epsilon} \right)$, we can bound the second term as follows:
\begin{equation} \label{eq:unitary.second.summand}
    \begin{split}
        \mathbb{E} & \left[ \left( \sum_{n=0}^{B_{d+1}} G_{d+1}(v', \Delta_{d+2}(w,n)) - G_{d+1}(v', \gamma_{d+2}(w)) \right)^2 \right]\\
        & = \mathbb{E} \left[ \left( G_{d+1}(v', R_{d+2}(w, 2^{-\frac{B_{d+1}}{2}}) - G_{d+1}(v', \gamma_{d+2}(w)) \right)^2 \right]\\
        & \leq L^2 \mathbb{E} \left[ \left( \sum_{w \in children(v')} \vert R_{d+2}(w, 2^{-\frac{B_{d+1}}{2}}) - \gamma_{d+2}(w) \vert \right)^2 \right]\\
        & \leq deg L^2 \sum_{w \in children(v')} \mathbb{E} \left[ \left( R_{d+2}(w, 2^{-\frac{B_{d+1}}{2}}) - \gamma_{d+2}(w) \right)^2 \right]\\
        & \leq deg L^2 \sum_{w \in children(v')} 2^{-B_{d+1}}\\
        & \leq deg^2 L^2 \frac{\epsilon^2}{8 deg^2 L^2 \kappa_{d+1}^2}\\
        & = \frac{\epsilon^2}{8 \kappa_{d+1}^2}.
    \end{split}
\end{equation}

Plugging estimates (\ref{eq:unitary.first.summand}) and (\ref{eq:unitary.second.summand}) into (\ref{eq:unitary.RMSE.bound}), we obtain
\[ (\ref{eq:unitary.RMSE.bound}) \leq 2 \frac{1}{9 deg^2 L^2 \kappa_{d+1}^2} \epsilon^2 + 2 \frac{1}{8 \kappa_{d+1}^2} \epsilon^2 < \frac{\epsilon^2}{\kappa_{d+1}^2}. \]
By Assumption \ref{ass:deterministic.step.unitary} Point 3, this implies that
\[ \mathbb{E}[(R_{d}(v,\epsilon) - \gamma_d(v))^2] \leq \epsilon^2. \]
We conclude that, from the induction assumption, we get the desired RMSE bound.

If $type(d-1) = non-det$, we apply inequality (\ref{eq:unitary.RMSE.bound}) directly and obtain the required bound after just one step of the recursion. (Recall that $\kappa_d \geq 1$.) This finishes the proof of the RMSE bound.\\

To estimate the complexity, we start by estimating the second momentum of $\Delta_{d+1}(w,n)$. For $n > 0$, we use the established RMSE bounds to estimate

\begin{equation} \label{eq:unitary.second.momentum}
    \begin{split}
        \mathbb{E} & \left[ \Delta_{d+2}(w,n)^2 \right]\\
        & = \mathbb{E} \left[ (R_{d+2}(w, 2^{-\frac{n}{2}}) - R_{d+2}(w, 2^{-\frac{n-1}{2}}))^2 \right]\\
        & \leq 2 \mathbb{E} \left[ \left(R_{d+2}(w, 2^{-\frac{n}{2}}) - \gamma_{d+2}(w) \right)^2 \right]\\
        & \qquad + 2 \mathbb{E} \left[ \left( \gamma_{d+2}(w) - R_{d+2}(w, 2^{-\frac{n-1}{2}}) \right)^2 \right]\\
        & \leq 2 (2^{-n} + 2^{-n+1})\\
        & = 6 \frac{1}{2^n}
    \end{split}
\end{equation}

For $n = 0$, we do a similar chain of inequalities to obtain
\[ \mathbb{E}[\Delta_{d+2}(w,0)] \leq 2 \cdot 2^0 + 2 \Gamma^2, \]
where $\Gamma := \sup_{w \in V(T)_{d+1}} \gamma_{d+1}(w)$. The extra constant $\Gamma$ in the case $n=0$ is the reason why we assume that there is no linear blowup. With no linear blowup, we can control this constant precisely in terms of the range of the values of the leaves. In particular, we can easily rescale and translate the initial values to make this extra constant be equal to one. Thus, $\Gamma$ will not affect the complexity estimate.\\

We now use this bound to compute the cost of our algorithm. We set $s_n = 3 \frac{1}{2^{\frac{n}{2}}}$ so that $\mathbb{E}[\Delta_{d+1}(w,n)] \leq s^2$ for every $w$. For $n=0$, we set
\[ s_0 = \sqrt{ 2 + 2 \Gamma^2} \leq 2 + 2 \Gamma \leq O(1 + \Gamma). \]
Since $s$ for the case $n = 0$ can be packed into an $O(1 + \Gamma)$, we will not treat this summand specially in the sums below.

The query complexity of the lowest level of the tree is straight-forward and serves as our induction start. We perform induction over two iterations of the recursion and assume that $type(d-1) = det$ just like for the RMSE bound. As before, the case where $type(d-1) = non-det$ has no additional obstacles. 

Our induction assumption consists of two statements: First, we assume that the query-complexity $Q_{d+2}(U_{d+2,\epsilon})$ to compute $R_{d+2}(w, \epsilon)$, where $w \in V(T_{d+1})$, satisfies equation (\ref{eq:general.theorem}). Second, we assume that there is a non-increasing function $\tilde{Q}_{d+2}(\epsilon)$ for $\epsilon > 0$ such that
\[ Q_{d+2}(U_{d+2,\epsilon}) \leq \frac{1}{\epsilon} \tilde{Q}_{d+2}(\epsilon). \]
As part of the induction, we will identify the precise structure of $\tilde{Q}_{d+2}(\epsilon)$.

The query complexity contributions of $ConstructUnitary\_d$ and $ConstructUnitary\_non\_det\_d+1$ are described in Assumptions \ref{ass:deterministic.step.unitary} and \ref{ass:non-deterministic.step.unitary}, point 4 respectively. The other complexity contribution come from the For-loop over $0 \leq n \leq B_{d+1}$ and the recursive use of $BuildUnitaryEstimator$ in line 13 of the algorithm. Thus, we obtain
\begin{equation} \label{eq:unitary.query.complexity}
    \begin{split}
        Q_{d}(U_{d,\epsilon}) & \leq q_d(\epsilon) Q_{d+1}\left(\frac{\epsilon}{\kappa_d}\right)\\
        & = q_d(\epsilon) \sum_{i=0}^{B_{d+1}} c_Q \frac{\Lambda_{d+1} s_n 3 (B_{d+1} + 1) \kappa_{d+1}}{\epsilon} \log\left( \frac{\Lambda_{d+1} s_n 3 (B_{d+1} + 1) \kappa_{d+1}}{\epsilon} \right)(Q_{d+2}(2^{-\frac{n}{2}}) + Q_{d+2}(2^{-\frac{n-1}{2}}))\\
        & \leq q_d(\epsilon) c_Q \frac{\Lambda_{d+1} 3 (B_{d+1} + 1) \kappa_{d+1}}{\epsilon} \sum_{i=0}^{B_{d+1}} s_n \log\left( \frac{9 \Lambda_{d+1} (B_{d+1} + 1) \kappa_{d+1}}{\epsilon} \right) 2^{\frac{n}{2}} (1 + 2^{-\frac{1}{2}}) \tilde{Q}_{d+2}(2^{-\frac{B_{d+1}}{2}})\\
        & \leq \frac{1}{\epsilon} q_d(\epsilon) 3 c_Q (B_{d+1} + 1) \Lambda_{d+1} \kappa_{d+1} \sum_{i=0}^{B_{d+1}} 3 \frac{3}{2} \log\left( \frac{9 \Lambda_{d+1} (B_{d+1} + 1) \kappa_{d+1}}{\epsilon} \right) \tilde{Q}_{d+2}\left(2^{-\frac{B_{d+1}}{2}}\right)\\
        & = \frac{1}{\epsilon} q_d(\epsilon) \frac{27}{2} c_Q (B_{d+1} + 1)^2 \Lambda_{d+1}\kappa_{d+1} \log\left(\frac{9 \Lambda_{d+1} (B_{d+1} + 1)\kappa_{d+1}}{\epsilon} \right) \tilde{Q}_{d+2}\left(2^{-\frac{B_{d+1}}{2}}\right).
    \end{split}
\end{equation}

We can thus write $Q_d(U_{d,\epsilon}) = \frac{1}{\epsilon} \tilde{Q}_{d}(\epsilon)$ with
\begin{equation*}
    \tilde{Q}_d(\epsilon) = q_d(\epsilon) \frac{27}{2} c_Q (B_{d+1} + 1)^2 \Lambda_{d+1} \kappa_{d+1} \log \left( \frac{9 \Lambda_{d+1} (B_{d+1} + 1) \kappa_{d+1}}{\epsilon} \right) \tilde{Q}_{d+2}\left( 2^{-\frac{B_{d+1}}{2}} \right).
\end{equation*}

We use this equation to bound $\tilde{Q}_d(\epsilon)$ from above up to some constants. Set
\[ K := \frac{27}{2} c_Q, \qquad \kappa := \max(\kappa_d), \qquad \bar{\Lambda} := \max(\Lambda_d) \leq deg L \]
We can start with $\tilde{Q}_{D+1}(\epsilon) = 1$. Every double-recursion -- going through one deterministic and one non-deterministic layer -- accrues a multiplicative factor $q_d(\epsilon) \frac{27}{2} c_Q (B_{d+1} + 1)^2 \Lambda_{d+1} \kappa_{d+1}$ for the appropriate $d$. Note that $B_{d+1} = O\left(\log\left(\frac{deg L \kappa}{\epsilon}\right)\right)$ depends on the $\epsilon$ used. Furthermore, the input into $\tilde{Q}_d$ and $\tilde{Q}_{d+2}$ differs by a factor $2 \sqrt{2} deg L \kappa$. This factor comes in once per double-recursion. As a consequence, we see that $B_{j} = O\left(\log\left( \frac{(deg L \kappa)^{\frac{D-j}{2}}}{\epsilon} \right) \right)$. Plugging this into $\tilde{Q}_d(\epsilon)$, we get the following form:
\[ \tilde{Q}_d(\epsilon) = \prod_{\substack{d \leq j \leq D \\ type(j-1) = det}} q_j\left(\frac{\epsilon}{(2 \sqrt{2} deg L \kappa)^{\frac{D-j}{2}}} \right) \cdot K^c deg^c L^c \kappa^c \cdot \prod_{\substack{d \leq j \leq D\\ type(j-1) = non-det}} (B_{j} + 1)^4. \]
The product over the $B_{j}+1$ results in a polynomial expression in the variables $\log(\frac{(2 \sqrt{2} deg L \kappa)^{j}}{y})$ where $0 \leq j \leq \frac{D-d}{2}$. Note that, for sufficiently small $\epsilon$, $\log(deg) \leq B_{d+1}$. Thus, we may bound every expression in $polylog(deg, \frac{1}{\epsilon}$ by $B_{d+1}$. Producing a fairly crude, non-optimal bound, we bound the two products by
\begin{equation*}
    \begin{split}
        \prod_{\substack{d \leq j \leq D \\ type(j-1) = det}} q_j\left(\frac{\epsilon}{(2 \sqrt{2} deg L \kappa)^{\frac{D-j}{2}}} \right) & \prod_{\substack{d \leq j \leq D\\ type(j-1) = non-det}} (B_{j} + 1)^4\\
        & \leq deg^{\alpha m} \left( B_{d+1}+1 \right)^{4c + \beta m}\\
        & \leq deg^{\alpha m} \left( \log\left( \frac{(2 \sqrt{2} deg L \kappa)^{\frac{D-d}{2}}}{\epsilon} \right) \right)^{4c + \beta m}\\
        & \leq deg^{\alpha m} \left(\frac{D-d}{2}\right)^{\max(4, \beta)(D-d)} \log\left( \frac{2 \sqrt{2} deg L \kappa}{\epsilon} \right)^{\max(4, \beta)(D-d)}\\
        & = O\left(deg^{\alpha m} (D-d)^{\max(4, \beta)(D-d)} \log\left( \frac{deg L \kappa}{\epsilon} \right)^{\max(4, \beta)(D-d)} \right).
    \end{split}
\end{equation*}

We thus obtain
\begin{equation*}
    \begin{split}
        Q_d(U_{d,\epsilon}) & = O\left( \frac{1}{\epsilon} deg^{\alpha m} K^c deg^c L^c \kappa^c (D-d)^{\max(4, \beta)(D-d)}  \log\left( \frac{deg L \kappa}{\epsilon} \right)^{\max(4, \beta)(D-d)}  \right)\\
        & = O\left( \frac{1}{\epsilon} deg^{\alpha \frac{D-d}{2}} K^{\frac{D-d}{2}} deg^{\frac{D-d}{2}} L^{\frac{D-d}{2}} \kappa^{\frac{D-d}{2}} (D-d)^{\max(4, \beta)(D-d)} \log\left( \frac{deg L \kappa}{\epsilon}\right)^{\max(4, \beta)(D-d)} \right).
    \end{split}
\end{equation*}
If $\alpha = \frac{1}{2}$, this turns into
\[ Q_d(U_{d,\epsilon}) = O\left( \frac{1}{\epsilon} deg^{\frac{D-d}{4}} K^{\frac{D-d}{2}} deg^{\frac{D-d}{2}} L^{\frac{D-d}{2}} \kappa^{\frac{D-d}{2}} (D-d)^{\max(4, \beta)\frac{D-d}{2}} \log\left( \frac{deg L \kappa}{\epsilon}\right)^{\max(4, \beta)\frac{D-d}{2}} \right). \]
This finishes the induction of the complexity estimate and the proof
\end{proof}

\begin{remark} \label{rem:improving.the.superexponential.term}
    The $D^D$-term is of course not desirable, since it creates a situation where an advantage over classical algorithms can only achieved if $deg \gg D$. When following through the induction, we can see that the $D^D$-term arises because the input into the recursive call of $BuildUnitaryEstimator$ has to lower the error-bound. If this error-reduction could be reset either during the deterministic, or non-deterministic step, the $D^D$-term disappears.

    This can be approached in at least three ways while staying with the overall structure of our algorithm: First, once can try to improve the algorithms for the passing between layers. Second, one can try to improve the structure of the algorithm so that the stricter error bounds become unnecessary. Third, the factor accrued by $\epsilon$ disappears if $2 \sqrt{2} deg L \kappa \leq 1$. This might be achievable for particular examples: note how $deg L = 1$ in the expectiminimax case, but it is highly unclear how to deal with the rest of the constants.
\end{remark}

\section{Proof of Lemma \ref{lem:coherent.composition}} \label{app:coherent.composition}

\begin{proof}[Proof of Lemma \ref{lem:coherent.composition}]
    \grn{The displayed form of $\mathcal{C}\ket{v}\ket{0}\ket{0}$ is no restriction. Any state on the $\omega$-register tensored with the rest can be written that way: group the amplitude by the content of the $\omega$-register, let $p^{(v)}_\omega$ be the squared norm of that block, and absorb its phase into $\ket{g^{(v)}_\omega}$. So $\mathcal{C}$ is a synthesizer for $p^{(v)}$ in the sense of \cite[Def.~2.2 and Rem.~2.6]{kothariMeanEstimationSourceCode2023}, whose garbage vectors are explicitly permitted, and by \cite[Rem.~2.12]{kothariMeanEstimationSourceCode2023} we may take the value map to be the identity. Condition (i) makes $\mathcal{C}$ admissible as such a circuit and supplies the controlled versions their algorithm uses.}

    \grn{Their algorithm is built from controlled-$\mathcal{C}$, controlled-$\mathcal{C}^\dagger$, the reflection $2\ket{0}\!\bra{0} - \mathrm{Id}$ and a phase oracle acting on the $\omega$-register alone. It never inspects $\ket{g^{(v)}_\omega}$, and its only hypothesis on the random variable is the second-moment bound (iii). Corollary 3.2 of \cite{sunOptimalQuantumSpeedups2026} adds a clipping of the output into $[-s,s]$ and a median over repetitions, both reversible arithmetic on the output register that may stay uncomputed. This gives the stated guarantee and cost for each fixed $v$.}

    \grn{By (ii) the result is one fixed circuit $W$, so $W(\ket{v}\otimes\ket{0}\ket{0}) = \ket{v} \otimes (\text{output for } v)$ on every basis state $\ket{v}$, and by linearity on every superposition of them. This property lets the layer above call $U$ with its vertex register entangled.}

    \grn{Finally, an approximate or failing subroutine inside $\mathcal{C}$ is not an exception to any of the above. Its failure branch is a branch of $\mathcal{C}\ket{v}\ket{0}\ket{0}$ like any other, so it is already part of $p^{(v)}$; it changes $\mathbb{E}[X_v]$ and $\mathbb{E}[X_v^2]$ and nothing else. The telescoping in (\ref{eq:unitary.RMSE.bound}) and the RMSE induction of Theorem \ref{thm:unitary.general} account for the bias this introduces.}
\end{proof}




\grn{Thus the composition needs no separate hypothesis. The properties of \cite[Cor.~3.2]{sunOptimalQuantumSpeedups2026} that we use are black-box in the sample preparer, and nothing in them distinguishes an inner random variable produced by a search from one produced by an evaluation. The precedent in the literature is stronger than it first appears: \cite[Algorithm 6]{sunOptimalQuantumSpeedups2026} already nests approximate, occasionally failing estimators for $D$ layers, and \cite[Thm.~3.2]{blanchetNonlinearQuantumMonteCarlo2025} does so at two.}

\grn{Condition (ii) also locates the obstruction to D\"urr--H\o yer better than the discussion before Algorithm \ref{alg:min-max-binary-search} does. Intermediate measurements are not by themselves the problem; \textcite{kothariMeanEstimationSourceCode2023} observe that deferring them brings a circuit that contains them into synthesizer form. What (ii) forbids is a subroutine whose cost depends on the branch, and D\"urr--H\o yer is variable-time: it is adaptive and its query count is a random variable. The same obstruction forces the level schedule of \cite{sunOptimalQuantumSpeedups2026} to be derandomised. It also forces the extremum routine of Algorithm \ref{alg:min-max-binary-search} to run a fixed number of rounds.}

\section{Proof of Lemma \ref{lem:search.step}} \label{app:proof.search}

\begin{proof}[Proof of Lemma \ref{lem:search.step}]
    \grn{For each $j$ let $p_j$ be the probability that one read of child $j$, followed by the comparison with $\grn{z}$, yields the bit $1$. We assume nothing about $p_j$ when $|\gamma_j - \grn{z}| \leq \epsilon$; for the other children the hypothesis gives $p_j \leq \eta$ if $\gamma_j \leq \grn{z} - \epsilon$ and $p_j \geq 1-\eta$ if $\gamma_j > \grn{z} + \epsilon$. Let $\mathcal{A}$ prepare the uniform superposition over $j$, apply the read and write the comparison bit, and let $P$ project onto comparison bit $1$. Then $\lVert P\mathcal{A}\ket{0}\rVert^2 = a^2:= \frac{1}{deg}\sum_j p_j$, and $Q:= -\mathcal{A}S_0\mathcal{A}^\dagger S_P$\vio{, where $S_0 := \mathrm{Id} - 2\ket{0}\!\bra{0}$ reflects about the initial state of all registers and $S_P := \mathrm{Id} - 2P$ reflects about the marked subspace, and $Q$ is the Grover iteration as in \cite{brassardQuantumAmplitudeAmplification2002}),} acts on the span of $P\mathcal{A}\ket{0}$ and $(1-P)\mathcal{A}\ket{0}$ as a rotation by $2\theta$ with $\sin\theta = a$, whatever garbage the read carries \cite{brassardQuantumAmplitudeAmplification2002}; after $k$ iterations the amplitude on $P$ is $\sin((2k+1)\theta)$. Since $a$ is unknown, one \emph{set} consists of the runs $Q^k\mathcal{A}\ket{0}$ for $k = 0, 1, 2, 4, \dots, 2^{\lceil \log_2\sqrt{deg}\rceil}$ on separate registers, its flag being the OR of their comparison bits; one \emph{repetition} is the OR of four independent sets; and \emph{found} is the majority of $R = 18\lceil\log(2/\delta')\rceil$ repetitions. All of this is reversible arithmetic on flag bits, and a repetition costs $O(\sqrt{deg})$ applications of the read. The fixed-point search of \cite{yoderFixedPointQuantumSearch2014} could replace the doubling schedule. Three configurations of the true values are possible.

    If $\max_j \gamma_j > \grn{z} + \epsilon$, then $a^2 \geq (1-\eta)/deg$, so $\theta \geq \sin\theta \geq \frac{1}{2\sqrt{deg}}$. If $\theta \geq \frac{\pi}{6}$ the run $k=0$ fires with probability $a^2 \geq \frac14$. Otherwise $(2k+1)\theta$ grows by a factor below $2$ along the schedule after the first step, from $3\theta < \frac{\pi}{2}$ to at least $2\sqrt{deg}\,\theta \geq 1 > \frac{\pi}{4}$, so some run has $(2k+1)\theta \in [\frac{\pi}{4},\frac{\pi}{2}]$ and fires with probability at least $\frac12$. So a set fires with probability at least $\frac14$, a repetition with probability at least $1-(\frac34)^4 > \frac23$, and by Hoeffding's inequality the majority reports \emph{not found} with probability at most $e^{-R/18} \leq \frac{\delta'}{2}$. In this configuration \emph{found} $\Rightarrow$ $\max_j\gamma_j > \grn{z}-\epsilon$ holds trivially.

    If $\max_j \gamma_j \leq \grn{z} - \epsilon$, then $a^2 \leq \eta$ and the run with $k$ iterations fires with probability $\sin^2((2k+1)\theta) \leq (2k+1)^2 \grn{\theta^2 \leq (2k+1)^2 \tfrac{\pi^2}{4}} a^2$\grn{, using $\theta \leq \frac{\pi}{2}\sin\theta$ on $[0,\frac{\pi}{2}]$}. So a set fires with probability at most $\sum_k (2k+1)^2\eta \leq \grn{\tfrac{\pi^2}{4}\big(22\,deg + 16\sqrt{deg} + \log_2 deg + 2\big)\,\eta \leq 100}\,deg\,\eta$ \grn{for $deg \geq 2$} and a repetition with probability at most $\grn{400}\,deg\,\eta \leq \frac14$ for $\grn{C_0 = 1600}$; by Hoeffding the majority reports \emph{found} with probability at most $e^{-R/8} \leq \frac{\delta'}{2}$. In this configuration \emph{not found} $\Rightarrow$ $\max_j\gamma_j \leq \grn{z}+\epsilon$ holds trivially.

    If $\grn{z} - \epsilon < \max_j \gamma_j \leq \grn{z} + \epsilon$, both implications hold whatever the verdict. So the two implications fail with probability at most $\delta'$, and we have assumed nothing about the comparison bits of the children within $\epsilon$ of the pivot. For $\min$ the roles of the two inequalities are exchanged.}
\end{proof}

\section{Proof of Proposition \ref{prop:composed.lower.bound}} \label{app:proof.lower}

\begin{proof}[Proof of Proposition \ref{prop:composed.lower.bound}]
    \grn{Write $m_2 := m_2(\epsilon)$, so that $n := deg^{m_2}$ satisfies $\frac{1}{4\epsilon} \leq n \leq \frac{deg}{4\epsilon}$, and put $m_1 := m - m_2 \geq 1$. We restrict the inputs of $T$ as follows. In the top $2m_1$ layers every chance node receives $deg$ identical subtrees, so these chance layers act as the identity and the top part computes the alternating $\max$--$\min$ formula $f$ of depth $m_1$ and fan-in $deg$ on the values of the $deg^{m_1}$ distinct subtrees rooted at depth $2m_1$, which we call blocks. Inside every block, every $\max$ and $\min$ node receives $deg$ identical subtrees, so the value of the block is the mean of $n = deg^{m_2}$ distinct boolean variables, each duplicated $deg^{m_1}$ times. We impose the promise that this mean lies in $[0, \frac12 - 2\epsilon] \cup [\frac12 + 2\epsilon, 1]$, which is realisable since $4\epsilon n \geq 1$. Let $g$ be the partial function on $\{0,1\}^n$ that outputs $1$ if the mean is at least $\frac12 + 2\epsilon$ and $0$ if it is at most $\frac12 - 2\epsilon$. Since $\max$ and $\min$ commute with thresholding at $\frac12$, the value of $T$ on such an input is at least $\frac12 + 2\epsilon$ if $(f \circ g^{deg^{m_1}})(x) = 1$, where we read $f$ as the AND--OR formula with OR at the $\max$ nodes, and at most $\frac12 - 2\epsilon$ if it is $0$. An estimate with RMSE at most $\epsilon$ lies on the correct side of $\frac12$ except with probability at most $\frac14$ by Chebyshev's inequality, and a query to a leaf of $T$ reveals one variable of $f \circ g^{deg^{m_1}}$. So $Q_{\epsilon}(T) \geq Q(f \circ g^{deg^{m_1}})$, where the right-hand side is the bounded-error query complexity.

    By \cite[Thm.~1.1]{leeQuantumQueryComplexityState2011}, $Q(h) = \Theta(\mathrm{Adv}^{\pm}(h))$ for every function $h$ with finite domain and range, and by \cite[Lemma~5.2]{leeQuantumQueryComplexityState2011}, $\mathrm{Adv}^{\pm}(f \circ g^{deg^{m_1}}) \geq \mathrm{Adv}^{\pm}(f)\,\mathrm{Adv}^{\pm}(g)$, since $f$ is total and $g$ has range $\{0,1\}$. For the outer function, $\mathrm{Adv}^{\pm}(f) \geq \mathrm{Adv}(f) = deg^{m_1/2}$: the positive-weight adversary bound of $\mathrm{AND}_{deg}$ and of $\mathrm{OR}_{deg}$ is $\sqrt{deg}$ and it is multiplicative under composition \cite[Thm.~11]{hoyerNegativeWeightsMake2007}. This is the $\Omega(\sqrt N)$ bound of \cite{barnumLowerBoundQuantum2004} for read-once formulae on $deg^{m_1}$ variables. For the inner function, distinguishing Hamming weight at most $\frac n2 - 2\epsilon n$ from at least $\frac n2 + 2\epsilon n$ on $n$ bits requires $\Omega(\sqrt{t(n-t)}/\Delta)$ queries with $t \approx \frac n2$ and $\Delta = 4\epsilon n \geq 1$ \cite{nayakQuantumQueryComplexity1999}, so $Q(g) = \Omega(\epsilon^{-1})$ and hence $\mathrm{Adv}^{\pm}(g) = \Omega(\epsilon^{-1})$. Altogether $Q_{\epsilon}(T) \geq \Omega( deg^{m_1/2} \epsilon^{-1}) = \Omega(deg^{(m-m_2)/2}\epsilon^{-1})$. For $\frac{1}{4\,deg} \leq \epsilon \leq \frac18$ we have $m_2 = 1$, and in general $deg^{m_2} \leq \frac{deg}{4\epsilon}$ gives $deg^{-m_2/2} \geq (4\epsilon/deg)^{1/2}$ and hence the bound $\Omega(deg^{\frac{m-1}{2}}\epsilon^{-1/2})$.}
\end{proof}

\end{document}